\date{}
%
\documentclass[11pt]{article}
\usepackage{hyperref}
\usepackage{tikz}
\usetikzlibrary{snakes}
\usetikzlibrary{decorations.text,calc,arrows.meta}
\usepackage[vcentermath]{youngtab}
\usepackage{tkz-euclide}
\usepackage{pgfplots}
\usetikzlibrary{decorations.shapes}
\usetikzlibrary{positioning}
\usetikzlibrary{patterns}
\usetikzlibrary{shapes.geometric}
\usetikzlibrary{decorations.pathmorphing}
\usetikzlibrary{arrows.meta}
\tikzset{snake it/.style={decorate, decoration=snake}}
\usetikzlibrary{arrows,shapes,positioning}
\usetikzlibrary{decorations.markings}
\tikzstyle arrowstyle=[scale=1]
\tikzstyle directed=[postaction={decorate,decoration={markings,mark=at position .65 with {\arrow[arrowstyle]{stealth}}}}]
\tikzstyle reverse directed=[postaction={decorate,decoration={markings,mark=at position .65 with {\arrowreversed[arrowstyle]{stealth};}}}]



\textwidth 15.3cm
\oddsidemargin 0in
\evensidemargin 0in
\textheight 22.3cm
\topmargin 0in
\headsep 0in

\usepackage{amsmath,amsthm,amsfonts,amssymb,amsopn,amscd} 
\usepackage{color}

\def\om{\omega}
\def\Om{\Omega}

\newcommand{\bthm}{\begin{theorem}}
\newcommand{\ethm}{\end{theorem}}

\newcommand{\bprop}{\begin{proposition}}
\newcommand{\eprop}{\end{proposition}}
\newcommand{\bcor}{\begin{corollary}}
\newcommand{\ecor}{\end{corollary}}
\newcommand{\blem}{\begin{lemma}}
\newcommand{\elem}{\end{lemma}}

\def\RR{{\mathbb R}}
\def\A{{\cal A}}

\def\D{{\cal D}}

\def\M{{\cal M}}
\def\N{{\cal N}}
\def\R{{\cal R}}

\def\H{{\cal H}}

\def\S{{\cal S}}

\def\f{{\varphi}}

\def\hPhi{{\widehat \Phi}}

\def\emptyset{\varnothing}

\def\PSL{{{\rm PSL}(2,\mathbb R)}}

\def\S2{S^{1(2)}}

\def\RR{\mathbb R}
%


\newtheorem{theorem}{Theorem}[section]
\newtheorem{lemma}[theorem]{Lemma}

\newtheorem{corollary}[theorem]{Corollary}

\newtheorem{proposition}[theorem]{Proposition}

\theoremstyle{definition} 

\theoremstyle{remark} \newtheorem{remark}[theorem]{Remark}

\newcommand{\ben}{\begin{equation}}
\newcommand{\een}{\end{equation}}

\def\emptyset{\varnothing}

\def\PSL{PSU(1,1)}

\def\CC{{\mathbb C}}

\def\bS{{\mathbb S}}

\def\SL2{{{\rm SL}(2,\R)}}

\def\PSL2{{{\rm PSL}(2,\Reali)}}

\def\U1{{{\rm V}(1)}}
\def\SU2{{{\rm SV}(2)}}

\def\SU{{{\rm SU}}}

\def\A{{\mathcal A}}

\def\D{{\mathcal D}}

\def\H{{\mathcal H}}

\def\M{{\mathcal M}}
\def\N{{\mathcal N}}

\def\P{{\mathcal P}}

\parskip1mm

\title{\Huge{Modular Time Evolution 
and the QNEC
}}

\author{ {\sc Stefan Hollands}\\
Institut für Theoretische Physik, Universität Leipzig \\
Brüderstrasse 16, 04103 Leipzig, Germany\\
Max Planck Institute for Mathematics in Sciences (MiS)\\
Inselstra{\ss}e 22, 04103 Leipzig, Germany}

\date{}
\begin{document}

\maketitle

\begin{abstract}
We establish an inequality restricting the evolution of states in quantum field theory with respect to the modular flow of a wedge, $\Delta^{is}$, for large $|s|$. Our bound is related to the quantum null energy condition, QNEC. In one interpretation, it can be seen as providing a ``chaos-bound'' $\le 2\pi$ on the Lyapunov exponent with respect to Rindler time, $s$. Mathematically, our inequality is a statement about half-sided modular inclusions of von Neumann algebras.
\end{abstract}

\section{Introduction}
\label{intro}
A mathematical manifestation of the Unruh effect \cite{U76} is that the vacuum correlation functions of a relativistic
quantum field theory (QFT), when restricted to a wedge region of Minkowski space (Rindler space), look thermal in terms of the boost parameter (Rindler time) $s$,
\ben
x^0(s) = \cosh(s) x^0 + \sinh(s) x^1, \quad
x^1(s) = \sinh(s) x^0 + \cosh(s) x^1.
\een
The inverse temperature is given by\footnote{Our units are such that $k_B=c=\hbar=1$.} $\beta = 2\pi$ in this parameterization. A similar phenomenon occurs, e.g., for correlation functions of the Hartle-Hawking state when considered in the exterior of the Schwarzschild spacetime, or for the Bunch-Davies state, when considered in a wedge-like patch of deSitter spactime, see e.g., \cite{Hollands:2014eia}. 

This fact can be recognized by noting that the correlation functions of the Minkowski vacuum state are periodic in $s$ with period $2\pi$ after a Wick rotation, $s \to is$, or more precisely, that they satisfy the so-called KMS condition, see e.g., \cite{haag_2}. As is well-known, the KMS condition is closely related to Tomita-Takesaki theory of von Neumann algebras, and in this language, the Unruh effect is the statement that the modular flow
(see section \ref{modthbas}), $\Delta^{is}_\Omega$, of the vacuum state $\Omega$ with respect to the observable algebra of a wedge is equal to $\Delta^{is}_\Omega = e^{-2\pi is K}$, where $K$ is the generator of boosts in the $(x^0,x^1)$ plane. In this formulation, the Unruh effect is a consequence of the Bisognano-Wichmann theorem, see e.g., \cite{haag_2}. In this note, we study more closely the behavior of a time evolved state,
\ben
\label{Ps}
\Phi_s := \Delta_\Omega^{is} \Phi
\een
for asymptotically large Rindler time $s \to \infty$ or $s \to -\infty$, with respect to observables in the right wedge. 

Although not necessary for the technical arguments in this note, we now restrict attention to the toy model of chiral conformal QFTs (CFTs). These describe either the left- or right-moving chiral quantum fields in a two-dimensional CFT and already capture the essence of the Unruh/Bisognano-Wichmann setting. In the algebraic approach to QFT \cite{haag_2} that we adopt, a chiral CFT is a net $\{\A(I)\}$ of von Neumann algebras, acting on a common Hilbert space, $\H$, labeled by the proper, possibly half-infinite, intervals $I \subset \RR$. The real line hosting the intervals is thought of as parameterized by the null coordinate $x^+ = x^0 + x^1$ or $x^- = x^0-x^1$, in terms of which a boost corresponds to a dilation. 

A minimal\footnote{We do not require a stress tensor which enhances the symmetry to the Virasoro algebra.} set of axioms for a chiral net is:
\begin{itemize}
    \item Isotony: $I \subset J \Longrightarrow \A(I) \subset \A(J)$.
    \item Locality: $I \cap J = \emptyset \Longrightarrow [\A(I),\A(J)] = \{0\}$.
    \item Covariance: There exists a strongly continuous, projective, unitary representation $U(g), g \in M$ of the M\"obius group $M=\widetilde{SL_2}(\RR)/\{\pm 1\}$ on $\H$, acting by fractional linear transformations on $\RR$, such that $U(g) \A(I) U(g)^* = \A(gI)$ whenever $g$ is continuously connected to the identity by a path $t \mapsto g_t$ such that $g_t I$ does not contain the point at infinity. The generator of translations is positive. 
    \item Vacuum: There is a unique vector $\Omega \in \H$ (vacuum) such that $U(g)\Omega=\Omega$ for all $g \in M$.
    \item Additivity: $I=\cup_j I_j \Longrightarrow \A(I) = \vee_j \A(I_j),$ with $\vee$ the von Neumann closure of the union operation.
\end{itemize}
For a chiral CFT, the Bisognano-Wichmann theorem is the statement that the modular group $\Delta^{is}_\Omega$
with respect to $\A([0, \infty))$ corresponds to the representer, under $U$, of a dilation by $e^{2\pi s}$, or said differently $\Delta^{-is}_\Omega \A(I) \Delta^{is}_\Omega = \A(e^{2\pi s}I)$ for all intervals $I$. 

Let us consider the state functionals $\varphi_s$ (``reduced density matrices'') of the pure states $\Phi_s$ \eqref{Ps}, for the algebra $\A([0, b])$, where $b>0$ is arbitrary but fixed. The ergodic nature of the modular group implies that \cite{R74}
\ben
\label{ergodic1}
\lim_{s \to -\infty} \varphi_s(m) = \omega(m) \| 
\Phi\|^2 \qquad \forall m \in \A([0, b]),
\een
where $\omega$ is the state functional of the vacuum $\Omega$. Thus in this sense, at small distances any state looks like the vacuum. On the other hand, for $s \to + \infty$, the state functionals $\varphi_s$ may be expected to exhibit some sort of chaotic behavior. 

One purpose of this note is to provide mathematical bounds qualifying this statement in the sense of providing a sort of bound on chaos. Our bound, described in more detail in theorem \ref{thm1}, is phrased in terms of $S_{\rm meas}(\varphi | \! | \psi)_\M$, the so-called {\it measured relative entropy}. It is an information theoretic distance measure between two state functionals $\varphi, \psi$ on a von Neumann algebra $\M$ (``reduced density matrices'') similar to the relative entropy, see section \ref{modthbas}. 
We will show that 
\ben
\label{thm117}
\begin{split}
S_{\rm meas}(\varphi_s | \! | \varphi_s \circ T \circ R)_{\A([0,b])} \,
\le -bS' \cdot e^{2\pi s} + o\left( be^{2\pi s} \right).
\end{split}
\een
Here, $o(x)$ is a function going to zero faster than $x$ as $x \to 0$.\footnote{I.e., $\lim_{x \to 0} x^{-1}o(x) = 0$.} $S'$ is the derivative of the ordinary relative entropy 
\ben
S' = \frac{d}{da} S(\Phi  | \! | \Omega)_{\A([a,\infty))} \bigg|_{a=0},
\een
with respect to the observables localized in a half-ray. By monotonicity \cite{U77}, $S' \le 0$. $R$ is a reflection about the midpoint of $[0,b]$ and $T$
is a smoothed out version of a translation by a fixed amount of order one in Rindler time, $s$.

As we will discuss in section \ref{equilibsec}, a possible interpretation of these inequalities is the following. For $s \to -\infty$, the inequality says that, on each interval $[0,b]$ of fixed but possibly large length $b=\ell \gg 1$, the state $\Phi_{s}$ has an invariance under dilations and reflections up to an error of the order $\ell S' e^{2\pi s}$. This error obviously goes to zero as $s \to -\infty$ at an exponential rate. Since translations and reflections are symmetries fixing the vacuum, the bound can be thought of as a counterpart of ergodicity, \eqref{ergodic}.

The other limit $s \to \infty$ implies a kind of chaos bound. Here, we think of an interval $[0,b]$ as having very small length $b=\epsilon \ll 1$. 
We will see that our bound implies in particular that the expectation value of a sharply observable, $m$ in $[0,b]$, minus the same observable displaced by $\sim \epsilon$, i.e., $U(\epsilon)mU(\epsilon)^*$, does not grow faster than $\epsilon S' e^{2\pi s}$ under dynamical evolution with respect to dilations (parameter $s$), so long as $s$ is still small enough in order that $\epsilon e^{2\pi s} \lesssim 1$. 

It would be interesting to study relations between our chaos-type bound and a well-known one by \cite{Maldacena:2015waa}. Their bound on the Lyapunov exponent $\lambda_L \le 2\pi T$ \cite{Maldacena:2015waa} is similar to ours, $\lambda_L \le 2\pi$, in the Rindler space setting, but is phrased in terms of a rather different-looking observable, namely out-of-time-ordered thermal correlation functions. We leave a deeper study to a future investigation and only remark that the authors \cite{Maldacena:2015waa} have noted that their bound applies to the Rindler setting, though not directly to two-dimensional CFTs, indicating a difference to our approach. 

Instead we will describe in more detail in this note how our bound is connected, at least mathematically, to the so-called quantum null energy condition (QNEC) \cite{B1,B2,W1}, in a formulation involving relative entropies by Ceyhan and Faulkner \cite{CF18}. Actually, as \cite{CF18}, we will prove our bounds in a framework that only uses certain number of abstract features of chiral CFTs called {\it half-sided modular inclusions} \cite{Bor1, Bor2, Wies1, AZ05,Florig}. Our framework is thus more general and includes, e.g., situations in which the modular flow is not necessarily ergodic. 

Given that important aspects of our proof rely on the proofs of the QNEC \cite{CF18,HL25}, it would be interesting to understand better the physical relation between chaos bounds and the QNEC. At the mathematical level, 
our proof also uses an interpolation theorem \cite{SH3} of non-commutative $L_p$-spaces \cite{AM82}, and various variational characterizations of entropy besides the QNEC.


\section{Relative modular operators and entropy}
\label{modthbas} 

Consider a von Neumann algebra $\M$ acting on a Hilbert space $\H$. Suppose we have two cyclic and separating  vectors $\Om,\Phi\in\H$ for $\M$. Following \cite{Ar76,Ar77}, one can then define the {\it  relative Tomita's operators} \cite{Ar76,Ar77} on $\H$ as the closures of
\[
S_{\Om,\Phi} \equiv S_{\Om,\Phi;\M} : x\Phi \mapsto x^*\Om\, , \quad x\in \M\, ,
\]
\ben\label{SF}
S'_{\Om,\Phi} \equiv S_{\Om,\Phi;\M'}: x'\Phi \mapsto {x'}^*\Om\, , \quad x'\in \M'\, .
\een
The polar decompositions of these operators yield the {\it relative modular operators and conjugations}:
\[
S_{\Om,\Phi} = J_{\Om,\Phi}\Delta^{1/2}_{\Om,\Phi}\, ,\qquad S'_{\Om,\Phi} = J'_{\Om,\Phi}\Delta'^{1/2}_{\Om,\Phi}\, .
\]
Among the well-known formulas that we use are
\ben\label{JJ3}
J'_{\Om,\Phi} = J_{\Phi,\Om} = J^*_{\Om,\Phi}  \, , \qquad  {\Delta}'_{\Om,\Phi} = {\Delta}^{- 1}_{\Phi,\Om} = J_{\Om,\Phi}\Delta_{\Om,\Phi} J^*_{\Om,\Phi} \, ,
\een
as well as the covariance properties:
\ben
\label{covariance}
{\Delta}'_{v\Om,u\Phi} = u{\Delta}'_{\Om,\Phi}u^*, \quad
{\Delta}_{v'\Om,u'\Phi} =  u'{\Delta}_{\Om,\Phi}u'^*,
\een
for any isometries $u,v \in \M, u', v' \in \M$. 
The {\it modular flow} is the the 1-parameter group of automorphisms of $\M$ respectively $\M'$ given by, respectively
\ben
\label{mflow1}
\sigma_\Phi^t(m) = \Delta_\Phi^{it} m \Delta_\Phi^{-it}, \quad \sigma_\Phi'^t(m') = \Delta_\Phi^{\prime it} m' \Delta_\Phi^{\prime -it},
\een
where here and in the following we write
\ben
\Delta_\Phi := \Delta_{\Phi,\Phi}.
\een
One may also express the modular flows using the relative modular operators:
\ben
\label{mflow2}
\sigma_\Phi^t(m) = \Delta_{\Phi,\Omega}^{it} m \Delta_{\Phi,\Omega}^{-it}, \quad \sigma_\Phi'^t(m') = \Delta_{\Phi,\Omega}^{\prime it} m' \Delta_{\Phi,\Omega}^{\prime -it}.
\een
Let $\f = (\Phi, \cdot \Phi)$, $\om = (\Om, \cdot \Om)$ be the states on $\M$ associated with $\Phi,\Om$. Then we have the the following formulas for the {\it Connes-cocycle} \cite{C73}:
\ben\label{uDD}
u_s =(D\om : D\f)_s = \Delta^{is}_{\Om,\Phi}\Delta^{-is}_{\Phi} =
\Delta^{is}_{\Om}\Delta^{-is}_{\Phi,\Om}
\, .
\een
One may see that 
$u_s$ respectively $u_s'$ are unitary operators from $\M$ respectively $\M'$ for all $s \in \RR$.
It is also possible to construct unitary cocycles from the modular conjugations \cite[App. C]{AM82}, \cite[App. A]{CF18}:
\ben
v_{\Omega,\Phi} = J_{\Omega,\Phi}' J_\Phi' = J_{\Omega}' J_{\Omega,\Phi}' \in \M.
\een
With a certain amount of analysis, see \cite{Ar77}, \cite[App. C]{AM82}, \cite[App. A]{CF18}, these constructions and equalities can be 
generalized to situations where $\Phi$ is not cyclic and/or not separating. The corresponding modifications involve the so-called support projections $s(\Phi) \in \M, s'(\Phi) \in \M'$. Here, for example,
$s(\Phi)$ is the orthogonal projection onto the closure of the subspace $\M'\Phi$. We will avoid such considerations in this paper by generally assuming that $\Phi$ is cyclic and separating except in cases where this is explicitly mentioned. $\Omega$ is always assumed to be cyclic and separating. 

Araki's {\it relative entropy} is defined by \cite{Ar76,Ar77}
\ben
\label{Sdef}
S(\Phi|\!| \Om)_\M =  - (\Phi , \log \Delta_{\Om,\Phi}\Phi). 
\een
In view of the first covariance property \eqref{covariance}, one has
\ben
\label{Sdefcov}
S(u'\Phi|\!| v'\Om)_\M = S(\Phi|\!| \Om)_\M, \quad 
S(u\Phi|\!| v\Om)_{\M'} = S(\Phi|\!| \Om)_{\M'},
\een
for any isometries $u,v \in \M, u',v' \in \M'$. Therefore, $S(\Phi|\!| \Om)_\M \equiv S(\varphi|\!| \omega)_\M$, meaning that the relative entropy only depends on the 
functionals $\f = (\Phi, \cdot \Phi)$, $\om = (\Om, \cdot \Om)$, the states on $\M$ associated with $\Phi,\Om$. A similar statement applies to the relative entropy with respect to the commutant, see the second formula in \eqref{Sdefcov}. 

In the case of a finite-dimensional type $I$ von Neumann factor $\M$, $\varphi, \omega$ can be identified with self-adjoint operators having strictly positive eigenvalues, such that
$\varphi(m) = {\rm Tr}(\varphi m)$ for all $m\in \M$, and similarly for $\omega$. Under this identification, we have
\ben
S(\varphi|\!| \omega)_\M = {\rm Tr}(\varphi \log \varphi - \varphi \log \omega).
\een
There exist other equivalent characterizations of the relative entropy. One such characterization \cite[Thm. 9]{hiai} that we will use is 
\ben
\label{eq:var}
S(\Phi |\!| \Psi)_\M = \sup\{ (\Phi, x\Phi) - \log \| \Psi^x \|^2 \}, 
\een
with supremum over all self-adjoint elements $x$ of $\M$. Here, $\Psi^x \in \H$ is defined for any self-adjoint element $x \in \M$, and cyclic and separating vector $\Psi \in \H$ by the strongly convergent Araki perturbation series
\cite{Araki5}, 
\ben \label{psih}
\Psi^x = \sum_{n=0}^\infty \int_0^{1/2} ds_1 \int_0^{s_1} ds_2 \dots \int_0^{s_{n-1}} ds_n \, \Delta_\Psi^{s_n}x\Delta^{s_{n-1}-s_n}_\Psi x 
\cdots \Delta^{s_{1}-s_2}_\Psi x \Psi.
\een
The {\it measured relative entropy}, see e.g., \cite{OP}, is a commutative cousin of the Araki relative entropy. It may defined as 
\ben
\label{comvar}
S_{\rm meas}(\Phi | \! | \Psi)_\M = \sup \{ S(\Phi | \! | \Psi)_{\mathcal C} \mid {\mathcal C} \subset \M \},
\een
where $\mathcal C$ ranges over all {\it commutative} von Neumann subalgebras of $\M$. Like ordinary relative entropy, $S_{\rm meas}$ depends only on the expectation functionals $\varphi = (\Phi, \, . \, \Phi), \psi = (\Psi, \, . \, \Psi)$ on $\M$ induced by the vectors $\Phi,\Psi$. 
The terminology for $S_{\rm meas}$ is more apparent from the following equivalent \cite{OP} characterization
\ben
S_{\rm meas}(\Phi | \! | \Psi)_\M
= \sup \Bigg\{
\sum_i p_i \log \frac{p_i}{q_i} \, 
\bigg|
\, 
p_i=\varphi(e_i), q_i = \psi(e_i)
\Bigg\},
\een
were the supremum is over all finite sets $\{e_i\}$ of orthogonal projections $e_i \in \M$ such that
$e_i e_j = \delta_{ij}e_j, \sum_i e_i = 1$. We can think of these as eigenprojection of some hermitian observable from $\M$ whose corresponding eigenvalue is measured with probability $p_i=\varphi(e_i)$ respectively $q_i=\psi(e_i)$ in the states $\varphi$ respectively $\psi$.

As an application of the variational principle \eqref{comvar}, 
consider the commutative von Neumann subalgebra of $\M$
generated by the projection $e$ onto the positive support of 
the functional $\varphi-\psi$ and $1$. This yields \cite{Ar76} 
\ben
\label{Arakibound}
S_{\rm meas}(\Phi | \! | \Psi)_\M \ge \frac{1}{2}\| \psi-\varphi\|^2.
\een
The norm on the right side is that of a linear functional, $\rho: \M \to \CC$, defined as 
\ben
\|\rho\| = \sup\{|\rho(m)| \mid m \in \M, \|m\| = 1\}.
\een
In the case of a type $I$ von Neumann factor $\M$, a bounded linear functional $\rho: \M \to \CC$ can be identified with a trace-class operator such that
$\rho(m) = {\rm Tr}(\rho m)$ for all $m\in \M$. Under this identification, $\| \rho \|$ is the trace norm ${\rm Tr} \sqrt{\rho\rho^*}$.

The measured entropy may also be characterized by
\cite[Prop. 7.13]{OP}
\ben
\label{SMdef}
S_{\rm meas}(\Phi | \! | \Psi)_\M = \sup \{ (\Phi,x\Phi) - \log \| e^{x/2} \Psi \|^2 \}, 
\een
where the supremum is again over all self-adjoint elements $x$ of $\M$, compare \eqref{eq:var}.

\section{Half-sided modular inclusions}
\label{hsm}

Consider a proper inclusion $\N \subset \M$ of von Neumann algebras acting on the same Hilbert space $\H$, and assume that the following two conditions are satisfied.
\begin{enumerate}
\item There exists a unit vector $\Omega \in \H$ which is cyclic and separating for both $\M$ and $\N$. 
\item For any $t \ge 0$ it holds that $\Delta^{-it}_\Omega \N \Delta^{it}_\Omega \subset \N$, where $\Delta_\Omega$ is the modular operator for $\M$. 
\end{enumerate}
Then one calls the structure $(\N \subset \M, \Omega)$ a {\it half-sided modular inclusion}, with respect to $\Omega$. 

The point is that, for a half-sided modular inclusion, one has {\it two} modular operators associated with the same state $\Omega$ but different algebras, $\N$ and $\M$, satisfying the compatibility condition 2. It turns out that this has interesting consequences for the family of unitary operators 
\ben\label{Wiesbrock}
U(1-e^{-2\pi t}) = \Delta^{it}_{\Omega; \N }\Delta^{-it}_{\Omega;\M}. 
\een
The following structural result \cite{Wies1,AZ05,Florig} is called {\it Wiesbrock's theorem} for half-sided modular inclusions $(\N\subset\M, \Omega)$:
 
 \begin{theorem}
 \label{thm1}
There is a family of unitary operators $U(a), a \in \RR$ which is given by \eqref{Wiesbrock} for $a \le 1$, and which are realizing the situation described by Borchers' theorem \cite{Bor1,Bor2}:
 \begin{enumerate}
 \item $\RR \owns a \mapsto U(a)=e^{iaP}$ is a strongly continuous 1-parameter group of unitary operators on $\H$
 with positive generator $P$, meaning ${\rm spec} P  \subset [0,\infty)$.
 \item We have $U(a) \Omega = \Omega$ for all $a \in \RR$.
 \item We have $\M(a) := U(a) \M U(a)^* \subset \M$ when $a \ge 0$.
 \item We have $\M(1) = \N$.
 \item $\Omega$ is cyclic and separating for each $\M(a)$ and therefore for each $\M(a)'$.  If $a>0$, then $(\M(a) \subset \M,\Omega)$ is a half-sided modular inclusion. More generally, $(\M(b) \subset \M(a),\Omega)$ for $b > a$ are half-sided modular inclusions.
 \item We have $\Delta^{-it}_{\Omega} U(a) \Delta^{it}_{\Omega} = U(e^{2\pi t} a)$ for all $t,a \in \RR$, implying that $\Delta^{-it}_{\Omega} P \Delta^{it}_{\Omega} = e^{2\pi t} P$ for all $t \in \RR$, on the domain $\D(P)$ of $P$ given by Stone's theorem. In particular, $\D(P)$ is invariant under $\Delta^{it}_{\Omega}$.
These relations imply $\Delta_\Omega^{-it} \M(a) \Delta_\Omega^{it} =\M(e^{2\pi t}a)$.
 \item We have $\M(-a)'=U(-a) \M' U(-a)^* \subset \M'$ for $a \ge 0$ and $(\M(-a)' \subset \M',\Omega)$ is a half-sided modular inclusion when $a>0$. $J_{\Omega} U(a) J_{\Omega} = U(-a)$ for all $a \in \RR$.
 \end{enumerate}
 \end{theorem}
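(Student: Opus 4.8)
The plan is to reduce the theorem to Borchers' theorem \cite{Bor1,Bor2} by manufacturing, out of the two modular groups that a half-sided modular inclusion attaches to the single vector $\Om$, a one-parameter unitary group $U(a)=e^{iaP}$ with positive generator, and then reading assertions 1--7 off as corollaries. Write $\Delta:=\Delta_\Om$ for the modular operator of $\M$ and $\Lambda:=\Delta_{\Om;\N}$ for that of $\N$, and set $W(t):=\Lambda^{it}\Delta^{-it}$ for $t\in\RR$; each $W(t)$ is unitary and, since $\Delta^{it}\Om=\Om=\Lambda^{it}\Om$, fixes $\Om$. For $a<1$ we want $U(a):=W(t)$ where $a=1-e^{-2\pi t}$, as in \eqref{Wiesbrock}. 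There are then two substantive things to establish: (i) that the reparametrised family $a\mapsto U(a)$ on $(-\infty,1)$ is the restriction of a strongly continuous one-parameter unitary group on $\RR$ obeying the dilation covariance of item 6; and (ii) that the resulting generator $P$ is positive. Assertions 2--7 will follow from these by bookkeeping.

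For step (i) I would extract from the one-sided hypothesis $\Delta^{-it}\N\Delta^{it}\subset\N$ $(t\ge0)$ the statement that $\Delta^{it}$ and $\Lambda^{it}$ satisfy the commutation relations of the two-dimensional ``$ax+b$'' (dilation--translation) group; concretely, that $W(t_1)W(t_2)=W(t_3)$ whenever $e^{-2\pi t_3}=e^{-2\pi t_1}+e^{-2\pi t_2}-1$, which is precisely $U(a)U(b)=U(a+b)$ after reparametrisation, together with $\Delta^{-it}U(a)\Delta^{it}=U(e^{2\pi t}a)$. This is the first technical core, and I expect to prove it by the standard KMS/analyticity route: $t\mapsto\Delta^{it}n\Om$ is bounded and analytic in a strip by the KMS property of $\om$ for $(\M,\Om)$, whereas by the inclusion hypothesis the vector $\Delta^{it}n\Om$ already lies in $\overline{\N\Om}$ for every real $t\le0$ and $n\in\N$; playing this against the analogous data for $(\N,\Om,\Lambda)$, via a three-lines/edge-of-the-wedge argument, yields the relations above first inside a strip and then, by analytic continuation, as operator identities. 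The covariance relation then lets one define $U(a)$ for $a\ge1$ by dilating a translation of small parameter, and Stone's theorem produces $U(a)=e^{iaP}$ with $\Delta^{it}$-invariant domain, giving item 6; item 2, $U(a)\Om=\Om$, is inherited from the $W(t)$.

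Step (ii), ${\rm spec}\,P\subset[0,\infty)$ (item 1), is the content of Borchers' theorem and is the part I expect to be the real obstacle: it is here that the unbounded operators $\Delta$ and $P$ and their domains must be controlled simultaneously, and it is exactly this point at which Wiesbrock's original argument needed the corrections of \cite{AZ05,Florig}. The approach is to show that for $m'\in\M'$ the map $a\mapsto U(a)m'\Om$ extends to a bounded analytic function on the upper half-plane $\{\Im a>0\}$: boundedness is unitarity of $U(a)$ for real $a$, while analyticity is forced by combining the $\Delta$-covariance of item 6 (which converts the half-plane into a strip on which $z\mapsto\Delta^{iz}m'\Om$ is analytic by KMS) with the one-sided algebraic containment $U(a)\M U(a)^{*}\subset\M$ for $a\ge0$; a Paley--Wiener argument on this function then forces ${\rm spec}\,P\subset[0,\infty)$. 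Equivalently, one may invoke the operator-theoretic lemma of \cite{Florig} sharpening \cite{Bor1,Bor2}. The containment $U(a)\M U(a)^{*}\subset\M$ used here is itself part of the package: it follows from the covariance relation together with the single equality $U(1)\M U(1)^{*}=\N\subset\M$, which in turn emerges from a limiting argument as $a\to1^{-}$ (using $W(t)\to U(1)$ strongly and $U(1)\Delta^{it}U(1)^{*}=\Lambda^{it}$, the latter a special case of the covariance). Making these analytic-continuation and domain estimates fully rigorous is the crux of the whole proof.

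The remaining structural consequences are bookkeeping. Put $\M(a):=U(a)\M U(a)^{*}$. Item 3, $\M(a)\subset\M$ for $a\ge0$, has just been noted. Item 4, $\M(1)=\N$, follows from the limiting and modular-uniqueness arguments indicated above (two von Neumann algebras ordered by inclusion, with a common cyclic separating vector and the same modular group with respect to it, coincide). Item 5 is immediate: $U(a)$ is unitary and fixes $\Om$, so $\Om$ is cyclic and separating for each $\M(a)$ and each $\M(a)'$, and since $\Delta_{\M(a);\Om}^{it}=U(a)\Delta^{it}U(a)^{*}$ the iterated half-sided inclusions $(\M(b)\subset\M(a),\Om)$ for $b>a$ follow from items 3 and 6. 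For item 7, $J_\Om U(a)J_\Om=U(-a)$ is the second Borchers commutation relation, proved by an analyticity argument parallel to step (ii); then $\M(-a)'=U(-a)\M'U(a)$, whose containment in $\M'$ for $a\ge0$ follows from item 3 via $J_\Om\M J_\Om=\M'$, with the corresponding half-sided property coming from item 5 applied to the commutant. Thus the two genuinely hard points are the ``$ax+b$'' commutation relations of step (i) and, above all, the positivity of $P$ of step (ii).
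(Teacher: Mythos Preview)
The paper does not actually supply a proof of this theorem: it is stated as a structural result and attributed to \cite{Wies1,AZ05,Florig}, with no argument given in the text. Your sketch correctly outlines the strategy of those references---extracting the $ax+b$ commutation relations from the two modular groups via KMS/strip analyticity, then establishing positivity of $P$ (the point where Wiesbrock's original proof was incomplete and where \cite{AZ05,Florig} supply the repair)---so there is nothing to compare beyond noting that your proposal is a faithful account of the cited literature rather than of anything in the present paper.
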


Chiral CFTs, see section \ref{intro}, give examples of half-sided modular inclusions, taking $\M=\A([0,\infty))$, $\N=\A([1,\infty))$ and $\Omega =$ the vacuum. Indeed, by the Reeh-Schlieder theorem (see e.g., \cite{haag_2}), $\Omega$ is cyclic for $\A(I)$ if $I$
has non-empty interior, so condition 1 for half-sided modular inclusions holds. By the Bisognano-Wichmann theorem, $\Delta_{\Omega}^{it}$ corresponds to a dilation by the factor $e^{2\pi t}$, and this factor is $\ge 1$ for $t \ge 0$. Thus, condition 2 holds, too.\footnote{Note that we also have Haag duality $\A([0,\infty))'=
\A((-\infty, 0])$.} By Borchers' and Wiesbrock's theorems, $U(a)=e^{iaP}$ corresponds to a translation by $-a$. In formulas, we have 
\ben
U(a) \A(I) U(a)^*=\A(I+a), 
\quad
\Delta_\Omega^{-it} \A(I) \Delta_\Omega^{it} =\A(e^{2\pi t}I)
\een
for any interval $I$.
In particular, we have $\M(a) = \A([a,\infty))$. 

Given a half-sided modular inclusion, arising form a chiral CFT or not, and a state $\Phi$, one may consider the relative entropies
with respect to $\M(a)$ respectively $\M(a)'$ as  functions of $a$:
\ben
\label{Sadef}
S(a):=S(\Phi|\!| \Om)_{\M(a)}, \quad
\bar S(a):=S(\Phi|\!| \Om)_{\M(a)'}.
\een
These entropies have a number of non-trivial properties. First of all, by monotonicity of the relative entropy \cite{U77}, we have
\ben
S(b) \le S(a) , \quad \bar S(b) \ge \bar S(a) \qquad \text{if $a \le b$.}
\een
Furthermore, by \cite[Lem. 1]{CF18} or \cite[Prop. 3.2]{HL25}, if $S(a), \bar S(b)<\infty$ and $b\ge a$, 
we have
\ben\label{dS1}
S(a) - S(b)   =   {\bar S}(a) - {\bar S}(b) + 2\pi(b-a) (\Phi, P\Phi) \, ,
\een
for every vector state $\Phi\in \D(P)$, the domain of $P$, which is called a {\it sum rule}. 

Combining the sum rule and monotonicity of the relative entropy, it follows that $S, \bar S$ are absolutely continuous on any interval  on which both $S, \bar S$ are finite, and the derivatives $\partial S, \partial \bar S$ exist almost everywhere on any such interval \cite{HL25b}. 

Next, suppose that the derivative $\partial S(a)$ exists for some $a$ and that $\Phi \in \D(P) \cap \D(\log \Delta'_a) \cap \D(\log \Delta_a)$, so, in particular,
$S(a), \bar S(a), (\Phi, P \Phi)<\infty$. Then \cite[Prop. 4.5]{HL25b}
\ben
\label{prop2s1}
\partial S(a) = i\left(\Phi, [P, \log \Delta_a'] \Phi \right),
\een
and the analogous formula also holds for $\partial \bar S(a)$,  
\ben
\label{prop2s2}
\partial \bar S(a) = i\left(\Phi, [P, \log \Delta_a] \Phi \right).
\een
Here we set
\ben
\Delta_a := \Delta_{\Phi,\Omega;\M(a)}, 
\quad 
\Delta_a' := \Delta_{\Phi,\Omega;\M(a)'}.
\een
Another nontrivial property of $S(a),\bar S(a)$, implying the QNEC, is the {\it ant formula}
\cite{W1}, in the formulation by \cite{CF18} using half-sided modular inclusions:
Suppose that for some $a \in \RR$, $\partial S(a)$ 
exists, 
$\Phi \in \D(P) \cap \D(\log \Delta'_a)$, 
and $u_{s}' \Phi \in \D(P)$ for $s > s_0$ and some $s_0$, where $u_s' = (D\omega' : D\varphi')_s$ is the Connes-cocycle associated with $\M(a)'$ and $\Phi,\Omega$. Then 
\ben
\label{antleft}
-\partial S(a) = 2\pi \inf_{u' \in \M(a)'} (u'\Phi, P u'\Phi) =
2\pi \lim_{s \to \infty}
(u_s'\Phi, P u_s'\Phi)
\een
Likewise, suppose that for some $a \in \RR$, $\partial \bar S(a)$ 
exists, 
$\Phi \in \D(P) \cap \D(\log \Delta_a)$,
and $u_{s} \Phi \in \D(P)$ for $s < -s_0$ and some $s_0$, where $u_s = (D\omega : D\varphi)_s$ is the Connes-cocycle associated with $\M(a)$ and $\Phi,\Omega$. Then
we have 
\ben
\label{antright}
\partial \bar S(a) = 2\pi \inf_{u \in \M(a)} (u\Phi, P u\Phi) = 2\pi \lim_{s \to -\infty}
(u_s\Phi, P u_s\Phi).
\een
The infima in \eqref{antleft}, \eqref{antright} are over isometries $u,u'$ in the respective algebras. See \cite[Thm. 5.1, Cor. 5.5]{HL25b} and \cite{CF18} for different proofs of the ant formulas \eqref{antright}, \eqref{antleft}.

\section{A bound related to the QNEC}
\label{equilibsec}

The ant formula \eqref{antright}
is saturated by the sequence of {\it flowed states} 
$u_s \Phi, s \to -\infty$, where $u_s = (D\omega:D\varphi)_s$ is the Connes-cocycle 
for the von Neumann algebra $\M(a)$. Consider the corresponding flowed state functional on $\M(a)$, given by
\ben
\label{varphisdef}
\varphi_s(m) := (u_s \Phi, m u_s \Phi) = 
(\Delta_{\Omega,a}^{is} \Phi, m \Delta_{\Omega,a}^{is} \Phi), \qquad m \in \M(a),
\een
where the second equality uses \eqref{uDD} and
\eqref{mflow2}. Recall that $\Delta_{\Omega,a}$ is the modular operator for the von Neumann algebra $\M(a)$ with respect to the vector $\Omega$ appearing in the definition of half-sided modular inclusion. We similarly define $\varphi^{\prime}_s = (u_{s}' \Phi, \ . \ u_{s}' \Phi)$, considered as a state functional on $\M(a)'$. 

Our aim is to investigate $\varphi_s$, and 
$\varphi^{\prime}_s$. To state our result, we introduce the linear operator $T: \M(a) \to \M(a)$
\ben
\label{Tdef}
T(m) := \int_\RR
 \Delta_{\Omega,a}^{-it} m \Delta_{\Omega,a}^{it} \,  \beta_{0}(t) \, d t, \quad m \in \M(a),
\een
which defines a unital, completely positive map, i.e., a {\it quantum channel}, see e.g., \cite{OP} for further discussion of this notion.
Here, 
\ben
\label{beta0def}
\beta_0(t) =  \frac{\pi}{2}
\frac{1}{\cosh (2\pi t) +1},
\een
which is a probability density on $\RR$. In a chiral CFT, $T$ implements a weighted average, by the probability density $\beta_0(t)$, over dilations by the factor $e^{2\pi t}$ around the point $a$. 

We also consider 
\ben
\M(a,b) = \M(a) \cap \M(b)' \qquad b>a.
\een
In a chiral CFT, this is a local algebra, i.e., we have $\M(a,b)=\A([a,b])$, though for a general half-sided modular inclusion, these algebras may be trivial, or they may be non-trivial only for a sufficiently large $|a-b|$. The linear operator $R: \M(a,b) \to \M(a,b)$
\ben
\label{Rdef}
R(m) := J_{\Omega, \frac{a+b}{2}} m^* J_{\Omega, \frac{a+b}{2}}, \quad m \in \M(a,b),
\een
defines a quantum channel. In a chiral CFT, it corresponds to a combined charge conjugation and reflection about the midpoint, $(a+b)/2$, of the interval $(a,b)$ by the Bisognano-Wichmann theorem. Here, $J_{\Omega,c}$ is the modular conjugation corresponding to $\Omega$ with respect to $\M(c)$ for some $c \in \RR$.

\begin{theorem}
\label{mainthm}
Suppose that for some $a \in \RR$, $\partial S(a)$ exists, $\Phi \in \D(P) \cap \D(\log \Delta_a)$, 
that $u_{s} \Phi \in \D(P)$ for all $s \in \RR$,
and that $c\varphi \le \omega \le c^{-1} \varphi$ for some $c>0$. Then the flowed state $\varphi_s$
\eqref{varphisdef} satisfies
\ben
\label{thm111}
S_{\rm meas}(\varphi_s | \! | \varphi_s \circ T \circ R)_{\M(a,b)}
\le (a-b)\partial S(a) \cdot e^{2\pi s} + o\left[ (b-a)e^{2\pi s} \right]
\een
for $a,b,s \in \RR$ such that $(b-a)e^{2\pi s} \to 0+$. Likewise, if $\partial \bar S(a)$ exists, $\Phi \in \D(P) \cap \D(\log \Delta_a')$, 
$u_{s}' \Phi \in \D(P)$ for all $s \in \RR$,
and $c\varphi' \le \omega' \le c^{-1} \varphi'$ for some $c>0$, then
\ben
\label{thm112}
S_{\rm meas}(\varphi^{\prime}_s | \! | \varphi^{\prime}_s \circ T \circ R)_{\M(b,a)}
\le (a-b) \partial \bar S(a) \cdot e^{-2\pi s} + o\left[ (a-b)e^{-2\pi s} \right]
\een
for $a,b,s \in \RR$ such that $(a-b)e^{-2\pi s} \to 0+$. In \eqref{thm111}, \eqref{thm112}, $o(x)$ denotes a function such that $\lim_{x \to 0} o(x)/x = 0$.
\end{theorem}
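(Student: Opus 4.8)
The modular unitary $\Delta_{\Omega,a}^{is}$ fixes $\Omega$ and implements the automorphism $\sigma^s$ of $\M(a)$, and by \eqref{varphisdef} together with the cocycle identities of Section~\ref{modthbas} one has $\varphi_s=\varphi\circ\sigma^{-s}$ as functionals on $\M(a)$. Since $T$ is an average over the abelian modular group of $\M(a)$ it commutes with $\sigma^s$, while $R$ conjugates under $\sigma^s$ to the analogous reflection about the midpoint of the rescaled interval $[a,\,a+e^{2\pi s}(b-a)]$. Hence, by invariance of $S_{\mathrm{meas}}$ under the automorphism $\sigma^s$,
\ben
S_{\mathrm{meas}}(\varphi_s | \! | \varphi_s\circ T\circ R)_{\M(a,b)}
= S_{\mathrm{meas}}(\varphi | \! | \varphi\circ T\circ R)_{\M(a,\beta)},\qquad \beta:=a+e^{2\pi s}(b-a).
\een
As $(b-a)e^{2\pi s}\to0+$ we have $\beta\to a+$ and $(a-b)\partial S(a)\,e^{2\pi s}=(a-\beta)\partial S(a)$; since $\partial S(a)$ exists, also $S(a)-S(\beta)=(a-\beta)\partial S(a)+o(\beta-a)$. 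So it suffices to prove
\ben
S_{\mathrm{meas}}(\varphi | \! | \varphi\circ T\circ R)_{\M(a,b)}\;\le\; S(a)-S(b)+o(b-a)\qquad(b\to a+).
\een

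\textbf{The variational formula and the channels.} By \eqref{SMdef} the left side equals $\sup\{\varphi(x)-\log(\varphi\circ T\circ R)(e^x)\}$ over self-adjoint $x\in\M(a,b)$. Writing $R(e^x)=e^{x^R}$ with $x^R=J_{\Omega,(a+b)/2}\,x\,J_{\Omega,(a+b)/2}\in\M(a,b)$, and $T(e^{x^R})=\int_{\RR}e^{\sigma^{-t}(x^R)}\beta_0(t)\,dt$, one finds $(\varphi\circ T\circ R)(e^x)=\int_{\RR}\varphi_t(e^{x^R})\,\beta_0(t)\,dt$ with $\varphi_t=\varphi\circ\sigma^{-t}$. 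The crux is to bound $-\log$ of this integral from above \emph{without discarding the leading order in $b-a$}: the hypothesis $c\varphi\le\omega\le c^{-1}\varphi$, together with the $\sigma^t$- and $R$-invariance of $\omega$, makes $\varphi$, all $\varphi_t$ and $\varphi\circ R$ uniformly equivalent to $\omega$ and forces the near-optimal $x$ to be small, of size $O(\sqrt{b-a})$, so that the Araki perturbation series \eqref{psih} for $\Phi^x$ converges and can be estimated. This is exactly where the non-commutative $L_p$-interpolation theorem \cite{SH3,AM82} is used, as in the proofs of the QNEC in \cite{CF18,HL25b}.

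\textbf{Identifying the leading term and the commutant version.} Expanding to second order in the small optimal $x$ turns the supremum into a variance/covariance expression in the states $\varphi_t$ and the reflected functional $\varphi\circ R$ on $\M(a,b)$; integrating against $\beta_0$ and using the geometric action of $\sigma^{-t}$ and $R_{(a+b)/2}$ on $\M(a,b)$, the leading contribution collapses to $2\pi(b-a)$ times a limit of null energies $(u'_r\Phi,\,Pu'_r\Phi)$ of Connes-cocycle-flowed states. The ant formula \eqref{antleft} (equivalently the QNEC of \cite{CF18,HL25b}) identifies this limit with $-\partial S(a)/2\pi$, producing precisely $(a-b)\partial S(a)+o(b-a)=S(a)-S(b)+o(b-a)$; the sum rule \eqref{dS1} and the differentiated relations \eqref{prop2s1}, \eqref{prop2s2} are used in passing between $\partial S$, $\partial\bar S$ and the energy expectation, and all remaining terms are absorbed into $o(b-a)$ by the same interpolation estimates. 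Inequality \eqref{thm112} follows by running this argument verbatim for the half-sided modular inclusion of commutants (Theorem~\ref{thm1}(7)), with $\M(a)$ replaced by $\M(a)'$, $\varphi_s$ by $\varphi'_s$, $S$ by $\bar S$, $s$ by $-s$, and the roles of $a,b$ interchanged.

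\textbf{Main obstacle.} The heart of the matter is the estimate of $-\log\int_{\RR}\varphi_t(e^{x^R})\beta_0(t)\,dt$ that is simultaneously tight to leading order in $b-a$ and uniform in $t$: for large $|t|$ the flowed state $\varphi_t$ remains far from the vacuum on $\M(a,b)$, and it is only the interplay of the exponential decay of $\beta_0$ with the reflection $R$ that renders the $t$-integral convergent, while pinning the coefficient down to $\partial S(a)$ — rather than an uncontrolled $O(1)$ — genuinely requires the ant formula and not merely monotonicity of relative entropy. Carrying out this uniform estimate, and controlling the Araki perturbation series for the near-optimal $x$ via the $L_p$-interpolation theorem, is where essentially all the work lies.
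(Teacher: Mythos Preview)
Your reduction to $s=0$ is correct and is in effect what the paper does: replacing $\Phi$ by $u_s\Phi$ and invoking the scaling identity \eqref{usform5} amounts precisely to $b\mapsto be^{2\pi s}$, so one is left with the $s=0$ bound for small $b-a$.

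The rest of your sketch, however, is not a proof. The assertions that the near-optimal $x$ in \eqref{SMdef} is $O(\sqrt{b-a})$, that one ``expands to second order'' in it, and that the leading term ``collapses to $2\pi(b-a)$ times a limit of null energies $(u'_r\Phi,Pu'_r\Phi)$'' to be identified via the ant formula, are all unsubstantiated --- and in fact the ant formula is \emph{not used anywhere} in the proof. What enters at the very end is only the derivative identity \eqref{prop2s2} together with the sum rule \eqref{dS1}.

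The missing idea is a concrete holomorphic function to which the interpolation lemma is applied. For each self-adjoint $h\in\M$ one sets
\[
G(z)=e^{-izh}\,U(b)\,\Delta_{\Phi,\Omega}^{-iz}\,U(-be^{-2\pi z})\,\Delta_{\Phi,\Omega}^{iz}\,\Phi
\]
on the strip $\bS_{(0,1/2)}$. This is engineered so that (i) $G(t)=e^{-ith}g(t,b)\Phi$ with $g(t,b)\in\M$ \emph{unitary}, forcing $\|G(t)\|_{\infty,\Phi}\le1$; (ii) a modular-conjugation computation gives $\|G(t+i/2)\|^{2}=\|e^{h/2}U(b)J_{\Omega,\Phi}u_{-t}\Phi\|^{2}$; and (iii) $\|G(i/(2n))\|_{n,\Phi}^{n}\to\|\Phi^{h/2-k_b/2}\|^{2}$ where $k_b=U(b)kU(-b)-k$ and $k=\log\Delta_\Omega-\log\Delta_{\Phi,\Omega}\in\M$ (this is where $c\varphi\le\omega\le c^{-1}\varphi$ is used). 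The $L_p$-interpolation lemma of \cite{SH3} then yields, for \emph{every} $h$,
\[
\log\|\Phi^{h/2-k_b/2}\|^{2}\le \int_{\RR}\beta_0(t)\,\log\|e^{h/2}U(b)J_{\Omega,\Phi}u_t\Phi\|^{2}\,dt.
\]
Adding $\varphi(h)$ to both sides and taking $\sup_h$ via the two variational formulas \eqref{eq:var} and \eqref{SMdef} (plus Jensen's inequality) produces the \emph{exact}, non-asymptotic inequality
\[
\big(\Phi,\{-U(b)\log\Delta_{\Phi,\Omega}U(-b)+\log\Delta_{\Phi,\Omega}+2\pi bP\}\Phi\big)\;\ge\; S_{\rm meas}(\varphi\,|\!|\,\varphi\circ T\circ R)_{\M(0,b)},
\]
valid for all $b>0$. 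Only \emph{after} this does one linearize: the $b$-derivative of the left side at $0$ equals $2\pi(\Phi,P\Phi)-\partial\bar S(0)=-\partial S(0)$ by \eqref{prop2s2} and \eqref{dS1}. There is no perturbative expansion in the variational variable, and no Connes-cocycle-flowed null energies appear at any stage. Your ``Main obstacle'' paragraph correctly senses that the work lies in the interpolation estimate, but without the construction of $G$ there is no bridge from the variational formula to the half-sided modular data.
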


\begin{remark}
We refer to the appendix of \cite{HL25} for the construction of a wide class of states satisfying the assumptions of theorem \ref{mainthm}.
\end{remark}

\noindent
{\bf Discussion of theorem \ref{mainthm}.} First we combine \eqref{thm112} with the lower bound \eqref{Arakibound} on the measured relative entropy, considering $a,b$ such that $a-b=\ell$, where $0 < \ell$ is arbitrarily large but fixed. We consider $s>0$, chosen so large that 
\ben
\ell e^{-2\pi s} \lesssim 1,
\een
in order that the $o\left(\ell e^{-2\pi s} \right)$
term in 
\ben
\label{thm113}
\Big| \varphi_s'(m') - \varphi_s'(TR(m')) \Big|^2
\le  -2\ell \partial S(a)  e^{-2\pi s} + o\left(\ell e^{-2\pi s} \right),
\een
is negligible. Here, $m' \in \M(b) \cap \M(a)' = \M(b,a)$ with norm $\| m'\| =1$. In particular, let us consider the limit as $s \to \infty$. Then we learn that the norm of the functional $\varphi_s' - \varphi_s' \circ T \circ R$ on $\M(b,a)$ tends to zero when $s \to \infty$. 
Thus, for large $s$, $\varphi_s'$ is approximately invariant under the combined action of $R$, a reflection about the midpoint of $(a,b)$ \eqref{Rdef}, and $T$, the weighted averaging against the modular flow \eqref{Tdef}. The exact invariances under $R,T$ for any fixed $a,b$ characterize the vacuum state functional $\omega = (\Omega, \, . \, \Omega)$ in a chiral CFT. 

To interpret this, we notice that $\Phi_s' = \Delta_{\Omega, a}^{\prime is} \Phi$ implementing $\varphi_s'$ is the evolved state with respect to the modular dynamics. In a chiral CFT, this corresponds to a dilation by $e^{2\pi s}$ around $a$ in view of the Bisognano-Wichmann theorem. Thus, for $s \to \infty$, the excitations in the state $\Phi$ relative to the vacuum get dilated away to infinity from the viewpoint of observables $m' \in \M(b,a)$.  

As is well-known \cite[Thm. 3]{L79}, \cite[Thms. A, B]{MV24}, there is a strong connection between the existence of ergodic group actions on von Neumann algebras and the type $III_1$ property in the Connes classification. This applies in  particular to the setting of chiral CFT, where the modular flow $m' \mapsto \Delta_{\Omega, a}^{\prime -is} m' \Delta_{\Omega, a}^{\prime is} \equiv \sigma_{\Omega, a}^{\prime t}(m')$ is ergodic, and the local algebras for finite intervals $(a,b)$ are of type $III_1$, giving that \cite[Cor. 2.5]{R74}
\ben
\label{ergodic}
\lim_{s \to \infty} \varphi_s'(m') = \omega'(m') \| 
\Phi\|^2 \qquad \forall m' \in \M(b,a).
\een
Here, $\Phi \in \H$ is a pure state with associated state functional $\varphi' = (\Phi, \, . \, \Phi)$, and 
$\Omega$ is the vacuum with associated state functional $\omega' = (\Omega, \, . \, \Omega)$ on $\M(a)'$. Relation \eqref{ergodic} is a more direct way of stating and showing that excitations in the state $\Phi$ relative to the vacuum get dilated away from the viewpoint of $\M(b,a)$. It implies in view of $\omega' = \omega' \circ T \circ R$ 
that 
\ben
\label{thm115}
\lim_{s \to \infty} \Big( \varphi_s'(m') - \varphi_s'(TR(m')) \Big)
= 0.
\een
This is of course consistent with \eqref{thm113}, but unlike \eqref{thm113} does not show exponential decay as $s \to \infty$. 

Note at any rate, that theorem \ref{mainthm} and hence \eqref{thm113} holds for general half-sided modular inclusions, including ones for which the modular flow is not ergodic.

\medskip

Next, we consider consequences of \eqref{thm112}
for $s \to \infty$. Let $a,b$ such that $b-a = \epsilon$, where $0<\epsilon \ll 1$. Again, we consider $s>0$ and large, but, due to the different sign of $s$ in the exponentials compared to \eqref{thm111}, now still small enough that 
\ben
\epsilon e^{2\pi s} \lesssim 1, 
\een
in order that the $o\left(\epsilon e^{2\pi s} \right)$ term in
\ben
\label{thm114}
\Big| \varphi_s(m) - \varphi_s(TR(m)) \Big|^2
\le  2\epsilon \partial \bar S(a)  e^{2\pi s} + o\left(\epsilon e^{2\pi s} \right)
\een
be negligible. Here, $m \in \M(a,b)$ with $\| m \| = 1$. 

In particular, let us choose $m \in \M(a, a+\epsilon/2)$, such that $R(m) \in \M(b-\epsilon/2,b)$, is a translate of $m$ by the amount $\epsilon/2$, i.e., $R(m) = U(\epsilon/2) m U(\epsilon/2)^*$. For any $\xi, s_0>0$, we can find a smoothed out version $\psi$ of the state $\varphi$, such that $\| \psi_s \circ T -\psi_s\| < \xi$ for all $s<s_0$, see lemma \ref{smoothedout}. Define
\ben
\delta_\epsilon m :=m-U(\epsilon/2) m U(\epsilon/2)^*,
\een
which is the difference between $m$ and a small displacement by $O(\epsilon)$ of $m$.
It follows that 
\ben
|\psi_s(\delta_\epsilon m)| \le 2\epsilon \partial \bar S(a) e^{2\pi s}
+ o\left(\epsilon e^{2\pi s} \right) + \xi.
\een
This expresses a chaos bound, i.e. the expectation value of $\delta_\epsilon m$ in the evolved state $\psi_s$, grows exponentially at most at rate $e^{2\pi s}$ up to a small correction $\xi$ specifying the smoothness of the initial state, $\psi$. In this sense, we can say that the Lyapunov exponent is $\le 2\pi$. 

\begin{lemma}
\label{smoothedout}
Let $\varphi$ be a state on $\M(a)$, and $\xi, s_0 >0$. Then we can find $n$ such that $\|\psi_s \circ T -\psi_s\| < \xi$ for all $s<s_0$, where $\psi = \varphi \circ T^n$. 
\end{lemma}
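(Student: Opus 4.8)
The plan is to reduce the assertion to a purely classical estimate on convolution powers of $\beta_0$, by exploiting that the averaging channel $T$ commutes with the modular flow it is built from.

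First I would record the commutation. With $\sigma_{\Omega,a}^t(m) := \Delta_{\Omega,a}^{it} m \Delta_{\Omega,a}^{-it}$ the modular automorphism group of $\Omega$ for $\M(a)$, one has $T(m) = \int_\RR \sigma_{\Omega,a}^{-t}(m)\,\beta_0(t)\,dt$, and since $\{\sigma_{\Omega,a}^t\}$ is a group, shifting the integration variable gives $\sigma_{\Omega,a}^u\circ T = T\circ\sigma_{\Omega,a}^u$ for all $u$. The flowed state of a normal state $\chi$ on $\M(a)$ is $\chi_s=\chi\circ\sigma_{\Omega,a}^{-s}$, which is just \eqref{varphisdef} rewritten and makes sense for any normal $\chi$. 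Hence, with $\psi=\varphi\circ T^n$,
\[
\psi_s\circ T=\varphi\circ T^{n+1}\circ\sigma_{\Omega,a}^{-s},\qquad \psi_s=\varphi\circ T^{n}\circ\sigma_{\Omega,a}^{-s},
\]
so that $\psi_s\circ T-\psi_s=\big(\varphi\circ T^{n+1}-\varphi\circ T^{n}\big)\circ\sigma_{\Omega,a}^{-s}$. As $\sigma_{\Omega,a}^{-s}$ is an isometric $*$-automorphism it carries the unit ball of $\M(a)$ onto itself, so $\|\psi_s\circ T-\psi_s\|=\|\varphi\circ T^{n+1}-\varphi\circ T^{n}\|$; in particular the bound is uniform in $s$ and the restriction $s<s_0$ is immaterial.

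Next I would pass to the predual. Since $T$ is a normal, unital, completely positive map it has a preadjoint $T_*$ on $\M(a)_*$ with $\varphi\circ T^{n}=(T_*)^n\varphi$; writing $V_u\rho:=\rho\circ\sigma_{\Omega,a}^{u}$, the $\{V_u\}_{u\in\RR}$ form a norm-continuous one-parameter group of isometries of $\M(a)_*$ and $T_*=\int_\RR\beta_0(t)V_{-t}\,dt$ (Bochner integral). By the group law and Fubini, $(T_*)^n=\int_\RR\beta_0^{*n}(t)V_{-t}\,dt$ with $\beta_0^{*n}$ the $n$-fold convolution, whence
\[
\|\varphi\circ T^{n+1}-\varphi\circ T^{n}\|=\Big\|\int_\RR\big(\beta_0^{*(n+1)}-\beta_0^{*n}\big)(t)\,V_{-t}\varphi\,dt\Big\|\le\|\beta_0^{*(n+1)}-\beta_0^{*n}\|_{L^1(\RR)}
\]
using $\|V_{-t}\varphi\|=\|\varphi\|=1$. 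It then suffices to choose $n$ so that $\|\beta_0^{*(n+1)}-\beta_0^{*n}\|_{L^1(\RR)}<\xi$, which is possible by the claim below.

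The remaining ingredient, and the only substantive step, is the classical fact $\|\beta_0^{*(n+1)}-\beta_0^{*n}\|_{L^1(\RR)}\to 0$ as $n\to\infty$. Here $\beta_0$ is a smooth probability density which is even (hence of mean zero) and has exponentially decaying tails (hence of finite, strictly positive variance $\sigma^2$), and the statement expresses, at the level of $L^1$, that the random walk with step density $\beta_0$ becomes asymptotically insensitive to an $O(1)$ translation. I would prove it from $\beta_0^{*(n+1)}(x)-\beta_0^{*n}(x)=\int_\RR\big(\beta_0^{*n}(x-s)-\beta_0^{*n}(x)\big)\beta_0(s)\,ds$, giving $\|\beta_0^{*(n+1)}-\beta_0^{*n}\|_{L^1}\le\int_\RR\|\tau_s\beta_0^{*n}-\beta_0^{*n}\|_{L^1}\,\beta_0(s)\,ds$ with $\tau_s$ translation by $s$; rescaling by $\sqrt n$ turns $\|\tau_s\beta_0^{*n}-\beta_0^{*n}\|_{L^1}$ into $\|\tau_{s/\sqrt n}g_n-g_n\|_{L^1}$ where $g_n(y)=\sqrt n\,\beta_0^{*n}(\sqrt n\,y)$ is the density of the normalized sum, so by the $L^1$ local central limit theorem ($g_n\to$ a Gaussian in $L^1$), together with $s/\sqrt n\to 0$ and $L^1$-continuity of translation, each integrand tends to $0$ while staying bounded by $2$, and dominated convergence concludes. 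The soft reductions of the first two paragraphs are routine; I expect this last convolution estimate — making precise that $\beta_0^{*n}$ flattens fast enough — to be where the real work lies.
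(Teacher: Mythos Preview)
Your argument is correct, and actually establishes a slightly stronger statement than the paper: by exploiting the commutation $\sigma_{\Omega,a}^u\circ T=T\circ\sigma_{\Omega,a}^u$ you see at once that $\|\psi_s\circ T-\psi_s\|$ is independent of $s$, so no restriction $s<s_0$ is needed. The reduction to the purely classical estimate $\|\beta_0^{*(n+1)}-\beta_0^{*n}\|_{L^1}\to 0$ via the Bochner integral on the predual is clean, and your local CLT argument goes through (for the $L^1$ convergence $g_n\to\phi$ one can invoke Scheff\'e's lemma together with the pointwise local CLT for densities, which applies since $\beta_0$ is bounded with finite variance).

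The paper takes a different, more computational route: it writes $T^n$ through the explicit Fourier transform $\widehat\beta_0(k)=\tfrac{k/2}{\sinh(k/2)}$ and estimates the resulting double integral directly, obtaining the $s$-dependent bound $C\|m\|\,\|\varphi\|\,(1+n^{-1}s^2)\,n^{-1/2}$, which is why the hypothesis $s<s_0$ enters there. What your approach buys is conceptual clarity and robustness --- it needs nothing about $\beta_0$ beyond its being an even probability density with finite variance and a bounded density, and it removes the spurious $s$-dependence. What the paper's approach buys is an explicit rate in $n$ without appealing to the local limit theorem; its estimate is entirely elementary once $\widehat\beta_0$ is known in closed form.
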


\begin{proof}
Using the Fourier transform $\widehat \beta_0(k)$ of $\beta_0(t)$ \eqref{beta0def},
\ben
\widehat \beta_0(k) = \int_\RR e^{ikt} \beta_0(t) dt = \frac{k/2}{\sinh(k/2)},  
\een
we can write $T^n$ \eqref{Tdef} as 
\ben
T^n(m) = \frac{1}{2\pi} \int_{\RR^2}
 \Delta_{\Omega,a}^{-it} m \Delta_{\Omega,a}^{it} \,  e^{-ikt} \left( \frac{k/2}{\sinh(k/2)} \right)^n \, dt dk, \quad m \in \M(a). 
\een
Thereby, we can write 
\ben
\begin{split}
&|\psi_s \circ T(m) -\psi_s(m)|\\
= \, &
\frac{1}{2\pi}
\left|
\int_{\RR^2}
\varphi(\sigma_{\Omega,a}^t(m)) \,  e^{-ik(t-s)}\left( 
 \frac{k/2}{\sinh(k/2)} -1
 \right) \left( \frac{k/2}{\sinh(k/2)} \right)^n \, dt dk
\right| .
\end{split}
\een
The double integral can be estimated in an elementary way by $\le C\| m\| \| \varphi \| (1+ n^{-1}s^2) n^{-1/2}$, which yields the statement of the lemma by choosing $n$ sufficiently large.
\end{proof}

\section{Proof of theorem \ref{mainthm}}

For simplicity of notation, we put $a=0$ in this proof. This is not a loss of generality, since $(\M(a-1) \subset \M(a),\Omega)$ is isomorphic as a half-sided modular inclusion to $(\N=\M(-1)\subset \M = \M(0),\Omega)$. Consequently,
we simply write $\Delta_\Omega$ for $\Delta_{\Omega,a}$, etc.

The proof is partly based on an interpolation theory for non-commutative 
$L_p$-spaces and is similar in this respect to arguments presented in \cite[Sec. 5.3]{SH3}, see also \cite{Rac,Sutter2} for related arguments. 
The weighted $L_p$-spaces relative to a fixed vector $\Phi \in \H$ that we use were introduced by Araki-Masuda \cite{AM82}. For the case $p \ge 2$ needed in this proof, 
the space $L_p(\M, \Phi)$ for a von Neumann algebra $\M$ is 
the linear subspace of $\H$ defined by
\ben
L_p(\M, \Phi) = \left\{ \xi \in \bigcap_{\eta \in \H} \D(\Delta_{\eta, \Phi}^{(1/2)-(1/p)}) \ \bigg| \ \|\xi\|_{p,\Phi} < \infty \right\}.
\een
The norm is 
\ben
\|\xi\|_{p,\Phi} = \sup_{\|\eta\|=1} \left\|\Delta_{\eta, \Phi}^{(1/2)-(1/p)}\xi \right\|.
\een
See also e.g., \cite{Berta2, Hiai2} for further discussions of non-commutative $L_p$-spaces. 

It is known \cite{AM82} that the $L_2$-norm is the ordinary norm on $\H$ induced by the scalar product, $\| \xi \|_{2, \Phi} = \|\xi\| = \sqrt{(\xi,\xi)}$, while the $L_{\infty}$-norm is characterized by 
$\|m\Phi\|_{\infty,\Phi} = \|m\|$, for any $m \in \M$, where $\|m\| = \sup_{\|\xi\|=1} \|m\xi\|$ is the operator norm of $m$.

In the following we use the notation $\bS_I$  for a strip, where $I \subset \RR$ is an interval:
\ben
\label{strip}
\bS_I = \{ z \in \CC \mid \Im z \in I\}.
\een
We require the following interpolation theorem \cite{SH3}:
\begin{lemma}
\label{lem:hirsch}
Let $0<\theta<1/2$, $n \in {\mathbb N}$.
If $G(z)$ is an $\H$-valued function holomorphic on the strip ${\mathbb S}_{(0,1/2)}$
that is bounded and weakly continuous on the closure ${\mathbb S}_{[0,1/2]}$ and
such that 
\ben\label{boundcond}
\left\| G(i/(2n))\right\|_{n, \Phi}, \sup_{t \in \RR} \left\| G(t) \right\|_{\infty, \Phi}, \sup_{t \in \RR} \left\| G(t+i/2) \right\|_{2, \Phi} < \infty,
\een  
then we have
\ben
\label{himp0}
 \ln \left\| G(i/(2n)) \right\|_{n, \Phi} \leq   \int_\RR
  \left(
  \beta_{1-1/n}(t) \log \left\| G(t) \right\|_{\infty, \Phi}^{1-1/n} +  \beta_{1/n}(t) 
  \log  \left\| G(t+i/2) \right\|_{2, \Phi}^{1/n} \right) d t ,
  \nonumber
\een
where
\begin{equation}
\beta_\theta(t) = \frac{\sin(\pi\theta)}{2\theta[\cosh(2\pi t ) + \cos(\pi \theta)] }.
\end{equation}
\end{lemma}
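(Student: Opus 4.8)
The plan is to read Lemma~\ref{lem:hirsch} as a Hadamard three-lines (Phragm\'en--Lindel\"of) inequality for the Araki--Masuda interpolation scale, sharpened from a uniform bound on the two boundary lines to a pointwise bound at the interior point $i/(2n)$ by means of the harmonic measure of the strip $\mathbb{S}_{(0,1/2)}$; the weights $\beta_{1-1/n}$ and $\beta_{1/n}$ are, up to explicit normalizations, the Poisson kernels of that strip for its lower and upper boundary components evaluated at $i/(2n)$, while the powers $1-1/n$, $1/n$ of the boundary norms encode the interpolation parameter. Concretely, I would reduce the left-hand side to a supremum of \emph{scalar} bounded holomorphic functions on the strip, apply the classical three-lines theorem in its Poisson form to each of them, and then restore the suprema.

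\emph{Reduction to scalars.} Fix a unit vector $\eta$ from a dense set of cyclic and separating vectors and a unit vector $\zeta$ in the range of a bounded spectral projection $E_{[\lambda^{-1},\lambda]}(\Delta_{\eta,\Phi})$ of the relative modular operator; such $\zeta$ are dense in $\H$, analytic for $\Delta_{\eta,\Phi}$, and satisfy $\|\Delta_{\eta,\Phi}^{s}\zeta\|\le\lambda^{1/2}\|\zeta\|$ for $s$ in the relevant range. Let $\alpha$ be the affine function of $z$ with purely imaginary linear coefficient, normalized so that $\operatorname{Re}\alpha$ equals the $L_\infty$- and $L_2$-exponents $1/2$ and $0$ on the lines $\operatorname{Im}z=0$ and $\operatorname{Im}z=1/2$ and so that $\alpha(i/(2n))=(1/2)-(1/n)$ is the exponent of the $L_n$-norm; then $\operatorname{Re}\alpha(z)$ remains bounded on the closed strip. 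Put
\[
h(z):=\bigl(\Delta_{\eta,\Phi}^{\,\overline{\alpha(z)}}\,\zeta,\;G(z)\bigr).
\]
Since $z\mapsto\overline{\alpha(z)}$ is anti-holomorphic, $w\mapsto\Delta_{\eta,\Phi}^{w}\zeta$ is an entire $\H$-valued function, the inner product is anti-linear in its first argument, and $G$ is holomorphic and weakly continuous up to the boundary, $h$ is holomorphic on $\mathbb{S}_{(0,1/2)}$ and bounded and continuous on $\mathbb{S}_{[0,1/2]}$ (boundedness because $\|\Delta_{\eta,\Phi}^{\overline{\alpha(z)}}\zeta\|=\|\Delta_{\eta,\Phi}^{\operatorname{Re}\alpha(z)}\zeta\|\le\lambda^{1/2}$ while $\|G\|$ is bounded there). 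Using $(\Delta^{\overline w})^{*}=\Delta^{w}$, the analyticity of $\zeta$, and the domain memberships $G(i/(2n))\in\D(\Delta_{\eta,\Phi}^{(1/2)-(1/n)})$, $G(t)\in\D(\Delta_{\eta,\Phi}^{1/2})$ forced by the finiteness hypotheses \eqref{boundcond}, one reads off the boundary behavior: $h(i/(2n))=(\zeta,\Delta_{\eta,\Phi}^{(1/2)-(1/n)}G(i/(2n)))$; $|h(t)|\le\|\Delta_{\eta,\Phi}^{1/2}G(t)\|\le\|G(t)\|_{\infty,\Phi}$; and $|h(t+i/2)|\le\|G(t+i/2)\|=\|G(t+i/2)\|_{2,\Phi}$.

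\emph{Three lines and the conclusion.} Since $h$ is bounded and holomorphic on $\mathbb{S}_{(0,1/2)}$ and continuous up to the boundary, $\log|h|$ admits a Poisson representation (no mass at infinity, $h$ being bounded), so $\log|h(i\tau)|$ is at most the integral of $\log|h(t)|$ and $\log|h(t+i/2)|$ against the harmonic measure of the strip from $i\tau$. A conformal identification (with the standard strip, or the disc) computes that measure, and at $\tau=1/(2n)$, once the edge estimates above are inserted, the two contributions assemble exactly into $\beta_{1-1/n}(t)\log\|G(t)\|_{\infty,\Phi}^{1-1/n}$ and $\beta_{1/n}(t)\log\|G(t+i/2)\|_{2,\Phi}^{1/n}$. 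Taking the supremum over the admissible $\zeta$ turns the left-hand side into $\log\|\Delta_{\eta,\Phi}^{(1/2)-(1/n)}G(i/(2n))\|$, and the further supremum over $\eta$, which the right-hand side does not see, into $\log\|G(i/(2n))\|_{n,\Phi}$, which is the claim (the case $n=1$ is degenerate). The genuinely delicate step is this scalar reduction: the modular powers $\Delta_{\eta,\Phi}^{\alpha(z)}$ are unbounded in the interior of the strip, so boundedness of $h$ there does not follow from that of $G$ and is secured only by moving the unbounded factor onto the analytic vector $\zeta$ and truncating the spectrum of $\Delta_{\eta,\Phi}$; it is precisely here that the three finiteness conditions in \eqref{boundcond} enter essentially, guaranteeing that $G$ lies in the correct modular domains on the three reference lines so that the pairing identities are legitimate and the suprema over the truncated $\zeta$ and over $\eta$ recover the true $L_n$-norm rather than a strictly smaller quantity. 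The remaining ingredients---holomorphy of $h$, the conformal computation of the harmonic measure of the strip, and the Poisson representation for bounded holomorphic functions---are classical.
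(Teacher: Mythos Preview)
The paper does not prove this lemma; it is quoted from \cite{SH3}. Your strategy---pairing $G(z)$ with spectrally truncated analytic vectors for $\Delta_{\eta,\Phi}$ so as to produce a bounded scalar holomorphic function on the strip, applying the Poisson-kernel (Hirschman) refinement of the three-lines theorem, and then taking the suprema over $\zeta$ and $\eta$ to reconstitute the Araki--Masuda norm---is the standard route and is essentially the argument in \cite{SH3}, so there is nothing substantive to compare.

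One bookkeeping point does need fixing, however. The three conditions you impose on the affine function $\alpha$ are mutually incompatible: requiring $\operatorname{Re}\alpha\equiv\tfrac12$ on $\operatorname{Im}z=0$ and $\operatorname{Re}\alpha\equiv0$ on $\operatorname{Im}z=\tfrac12$ forces $\alpha(z)=iz+\tfrac12$ (up to an irrelevant purely imaginary additive constant), and then $\alpha(i/(2n))=\tfrac12-\tfrac{1}{2n}$, which is the modular exponent of the $L_{2n}$-norm, not $\tfrac12-\tfrac1n$. As written, your argument therefore yields a bound on $\log\|G(i/(2n))\|_{2n,\Phi}$ rather than on $\log\|G(i/(2n))\|_{n,\Phi}$. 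This gap can be closed either by invoking the monotonicity of the Araki--Masuda norms in $p$ (so that $\|\,\cdot\,\|_{n,\Phi}\le\|\,\cdot\,\|_{2n,\Phi}$), or---more likely---by checking \cite{SH3} for the precise normalizations, since the lemma as transcribed here may carry a harmless factor of two relative to the original; in the present paper the lemma is only ever used in the limit $n\to\infty$, where the distinction disappears. The method itself is sound.
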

We will exploit the consequences of this lemma for the $\H$-valued function 
\ben
\label{gta1}
G(t) = e^{-ith} U(b) \Delta^{-it}_{\Phi,\Omega} U(-be^{-2\pi t}) \Delta^{it}_{\Phi,\Omega} \Phi,  
\een
defined for a $h=h^* \in \M$ that we will chose later, $b \ge 0$, and at first for $t \in \RR$.

\blem
$G(z)$ can be extended to an $\H$-valued function holomorphic on the strip ${\mathbb S}_{(0,1/2)}$
that is bounded and weakly continuous on the closure
${\mathbb S}_{[0,1/2]}$. 
\elem
\begin{proof}
The statement can be proved by a repeated application 
of the following lemma to the factors in \eqref{gta1}. 

\begin{lemma}\label{proveanalyticity}
Suppose $F(z)$ is an $\H$-valued function which is analytic on $\bS_{(0,1/2)}$
and bounded and weakly continuous on the closure $ \bS_{[0,1/2]}$. Let $A(z), z \in \bS_{(0,1/2)}$ be a family of operators 
with common dense domain $\D$ such that $A(z)\eta$ is holomorphic on $\bS_{(0,1/2)}$, and strongly continuous on 
${\mathbb S}_{[0,1/2]}$
for $\eta \in \D$, 
and such that $C_0:= \sup_{t \in \RR} \| A(t) F(t) \|, C_1 := \sup_{t \in \RR} \| A(t+i/2) F(t+i/2)\|$ are finite.
Then $A(z) F(z)$ is an analytic function of $\bS_{(0,1/2)}$, 
which is bounded and weakly continuous on the closure 
${\mathbb S}_{[0,1/2]}$.
\end{lemma}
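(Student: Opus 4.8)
The plan is to reduce Lemma~\ref{proveanalyticity} to a Hadamard three-lines estimate for the weak pairings of $F(z)$ against the adjoint family of $A(z)$. The lemma is invoked on the factors of \eqref{gta1}, and these are of only two kinds, which I would treat separately. Either $A(z)$ is uniformly bounded on $\bS_{[0,1/2]}$ — this is clear for the constant $U(b)$, for $e^{-izh}$ (with $\|e^{-izh}\|\le e^{\|h\|/2}$ on the strip since $h=h^*$), and for $U(-be^{-2\pi z})=e^{-ibe^{-2\pi z}P}$, because for $b\ge 0$ and $\Im z\in[0,1/2]$ one has $\Im(-be^{-2\pi z})=be^{-2\pi\Re z}\sin(2\pi\Im z)\ge 0$, so by Theorem~\ref{thm1} ($P\ge 0$), $\|U(-be^{-2\pi z})\|=\|e^{-\Im(-be^{-2\pi z})P}\|\le 1$ — or $A(z)=\Delta^{\pm iz}_{\Phi,\Om}$ is a complex power of the relative modular operator. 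In the bounded case the conclusion is immediate: a product of a uniformly bounded, weakly analytic, weakly continuous operator family with the analytic, bounded, weakly continuous $\H$-valued function $F$ is again analytic on $\bS_{(0,1/2)}$ and bounded and weakly continuous on $\bS_{[0,1/2]}$. So the content lies in the case $A(z)=\Delta^{iz}_{\Phi,\Om}$; the case $\Delta^{-iz}_{\Phi,\Om}$ goes the same way with the roles of the two boundary lines exchanged.

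For $\Delta=\Delta_{\Phi,\Om}$, let $\D_0$ be the dense set of entire analytic vectors for $\log\Delta$; it is a core for every complex power of $\Delta$, and $(\Delta^{iz})^*=\Delta^{-i\bar z}$. For $\eta\in\D_0$ I would set $g_\eta(z):=(\Delta^{-i\bar z}\eta,F(z))$ on $\bS_{[0,1/2]}$. This is holomorphic on $\bS_{(0,1/2)}$ (since $z\mapsto\Delta^{-i\bar z}\eta$ is antiholomorphic and $F$ holomorphic), continuous and bounded on $\bS_{[0,1/2]}$ (using norm-continuity of $z\mapsto\Delta^{-i\bar z}\eta$, weak continuity and boundedness of $F$, and the fact that $\|\Delta^{-i\bar z}\eta\|=\|\Delta^{-\Im z}\eta\|$ does not depend on $\Re z$). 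On the lower line $F(t)\in\D(\Delta^{it})$ (since $C_0<\infty$), so $g_\eta(t)=(\eta,\Delta^{it}F(t))$ and $|g_\eta(t)|\le\|\eta\|C_0$; on the upper line $F(t+i/2)\in\D(\Delta^{i(t+i/2)})$ (since $C_1<\infty$), so $|g_\eta(t+i/2)|\le\|\eta\|C_1$. Hadamard's three-lines theorem then gives $|g_\eta(z)|\le\|\eta\|\,C_0^{1-2\Im z}C_1^{2\Im z}$ on $\bS_{[0,1/2]}$.

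From here I would build the vector $A(z)F(z)$. For each fixed $z$ the map $\eta\mapsto g_\eta(z)$ is a bounded antilinear functional on the dense $\D_0$, so there is a unique $h(z)\in\H$ with $g_\eta(z)=(\eta,h(z))$ for $\eta\in\D_0$ and $\|h(z)\|\le C_0^{1-2\Im z}C_1^{2\Im z}$. Since $\D_0$ is a core for $\Delta^{-i\bar z}=(\Delta^{iz})^*$, the identity $(\Delta^{-i\bar z}\eta,F(z))=(\eta,h(z))$ for all $\eta\in\D_0$ says exactly that $F(z)\in\D(\Delta^{iz})$ with $\Delta^{iz}F(z)=h(z)$. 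Hence $A(z)F(z):=h(z)$ is well defined on $\bS_{[0,1/2]}$, agrees with the a priori boundary values, is bounded by $C_0^{1-2\Im z}C_1^{2\Im z}$, is weakly holomorphic on $\bS_{(0,1/2)}$ (each $(\eta,h(z))=g_\eta(z)$, $\eta\in\D_0$, is holomorphic, and the uniform bound $\|h(z)\|\le\max(C_0,C_1)$ extends this to all $\eta\in\H$), hence strongly holomorphic, and is weakly continuous on $\bS_{[0,1/2]}$ (each $g_\eta$ is, and the uniform bound passes from $\D_0$ to $\H$ by uniform approximation). Applying this to the factors of \eqref{gta1} in turn then yields the analyticity and boundedness of $G(z)$.

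The step I expect to be the main obstacle — and the only genuinely delicate point — is precisely that $F(z)$ need not a priori lie in $\D(A(z))$ in the interior of the strip, so the product $A(z)F(z)$ has to be \emph{constructed}, not merely estimated: the three-lines bound on the weak pairings $(\,(\Delta^{iz})^*\eta,F(z))$ produces $h(z)$, which must then be identified with the genuine operator action, and this identification relies on $\Delta^{iz}_{\Phi,\Om}$ being closed (equal to its bi-adjoint) and on the entire vectors forming a core for its adjoint; for a general $A(z)$ with the stated properties the same argument goes through once one knows the adjoint family is again densely defined and analytic, which is automatic for the factors at hand. The only other thing to watch is the absence of growth of $g_\eta$ as $\Re z\to\pm\infty$, so that the unweighted three-lines theorem applies; this is automatic here since $\|\Delta^{-\Im z}\eta\|$ is $\Re z$-independent, and for the bounded factors the estimates $\|U(-be^{-2\pi z})\|\le 1$ and $\|e^{-izh}\|\le e^{\|h\|/2}$ on $\bS_{[0,1/2]}$ take care of it.
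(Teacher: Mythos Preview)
The paper does not prove this lemma; it cites \cite[Lem.~3]{SH3} and \cite[Lem.~2.1]{CecchiniPetz} and omits the argument as ``minor''. Your route --- pairing $F(z)$ against the adjoint family on a core of entire vectors for $\log\Delta$, bounding the resulting scalar functions $g_\eta$ by Hadamard's three-lines theorem, and then reconstructing $A(z)F(z)$ via Riesz and the closedness of $\Delta^{iz}$ --- is the standard method behind those references and is correct for the factors occurring in \eqref{gta1}. You also correctly isolate the only genuine subtlety: that $F(z)\in\D(A(z))$ in the interior of the strip must be \emph{produced} from the boundary data, not assumed. Your closing caveat, that the lemma as literally stated does not assume $A(z)$ closed or $A(z)^*$ densely defined and analytic, so that some structural input beyond the bare hypotheses is needed for the fully general statement, is apt; the paper's own phrasing (``modest generalization \dots\ differences \dots\ so minor'') and its immediate application to exactly the two families you treat indicate that it, too, has the concrete factors in mind rather than a fully abstract $A(z)$.
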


As stated, the lemma is a modest generalization of \cite[Lem. 3]{SH3}, which in turn is a slight generalization of 
\cite[Lem. 2.1]{CecchiniPetz}. Since the differences in the proof are so minor, it is omitted.

To check the assumptions of lemma \ref{proveanalyticity} in each factor of \eqref{gta1}, we can use
 the well-known analyticity and continuity properties of relative modular operator $A(z):=\Delta^{-iz}_{\Phi,\Omega}$ when applied to vectors 
 in the domain $\D=\D(\Delta^{1/2}_{\Phi,\Omega})$ and of $A(z):=\Delta^{iz}_{\Phi,\Omega} = \Delta'^{-iz}_{\Omega,\Phi}$ when applied to vectors 
 in the domain $\D=\D(\Delta'^{1/2}_{\Omega,\Phi})$ for $z \in \bS_{(0,1/2)}$. We also use
the fact that $\Im (-be^{-2\pi z})\ge 0$ for $z \in  {\mathbb S}_{[0,1/2]}$, so that the operator $P \ge 0$ in $A(z):=U(-be^{-2\pi z}) = \exp [i(-be^{-2\pi z})P]$ 
provides an exponential damping that causes $A(z)$ to satisfy the analyticity and continuity assumptions of Lemma \ref{proveanalyticity}. 
\end{proof}

Having extended $G(z)$ so that it satisfies the assumptions of lemma \ref{lem:hirsch}, we first evaluate:
\ben
\label{Ginfn}
\left\| G(t) \right\|_{\infty, \Phi} = \| e^{-ith} g(t,b)\Phi \|_{\infty, \Phi} = \| e^{-ith} g(t,b) \| \le 1,
\een
where 
\ben
\label{gta}
g(t,b) = U(b) \Delta^{-it}_{\Phi,\Omega} U(-be^{-2\pi t}) \Delta^{it}_{\Phi,\Omega}. 
\een
In the second equality in \eqref{Ginfn}, we used that $g(t,b)$ is from $\M$, as one can see \cite{HL25b} by commuting this operator with $m' \in \M'$ and using the relations of half-sided modular inclusions. 

In order to find an appropriate expression for $\| G(i/(2n)) \|_{n, \Phi}$ valid for large $n$, we first make use of the 
relations for half-sided modular inclusions to obtain the alternative expressions
\ben
\begin{split}
\label{Grep}
G(t)=& \, e^{-ith}U(b) \Delta_{\Phi,\Omega}^{-it}\Delta_{\Omega}^{it}U(-b) \Delta_{\Omega}^{-it}\Delta_{\Phi,\Omega}^{it}\Phi\\
=& \, e^{-ith}U(b) u_{-t}^{-1} U(-b) u_{-t} \Phi\\
=& \, u'e^{-ith}U(b) \Delta_{\hPhi,\Omega}^{-it}\Delta_{\Omega}^{\prime -it}U(-b) \Delta_{\Omega}^{-it}\Delta_{\Omega,\hPhi}^{\prime -it}\hPhi.
\end{split}
\een
Here $\hPhi \in \P^\sharp_\Omega$ is the representer of $\Phi$ in the natural cone with respect to $\M,\Omega$, and 
$u' \in \M'$ is an isometry such that $u'\hPhi = \Phi$ \cite{AM82}. In the second line in \eqref{Grep}, $u_t = (D\omega:D\varphi)_t$
is the Connes-cocycle \eqref{uDD}.
By well-known results, see e.g., \cite[Lem. 5]{SH2}, the family $u_z$ can be extended to a holomorphic function of $z \in \bS_{(-1/2,1/2)}$, if (and only if) $c\varphi \le \omega \le c^{-1} \varphi$ for some $c>0$, as we are assuming. We therefore have a Taylor series
\ben
\label{kdef}
u_z = 1 + ikz + O(z^2) \in \M
\een
for sufficiently small $|z|$, where $\|O(z^2)\| \le C |z|^2$ and where $k=\Delta_\Omega - \Delta_{\Phi,\Omega}$ is a self-adjoint element of $\M$. Using the penultimate 
representation in \eqref{Grep}, we may thereby write 
\ben
\label{kdef1}
G(i/(2n)) = e^{h/(2n)}\left[1 + \frac{U(b) k U(-b) - k}{2n} +O\left(\frac{1}{n^2}\right) \right]\Phi =: a_n \Phi,
\een
where the last equality defines $a_n \in \M$. 
When taking the $L_n$-norm of this expression, we first use a lemma, which has exactly the same proof as \cite[Lem. 6]{SH3}.

\blem
Let $n \in 4 \mathbb{N}$, $m \in \M$. Then $\| m \Phi\|_{n,\Phi}^n = \| (m \Delta_\Phi^{2/n} m^*)^{n/4}\Phi\|^2$.
\elem

Applying this lemma, we have 
\ben
\| G(i/(2n)) \|^n_{n,\Phi} = \| a_n \Phi \|_{n,\Phi}^n = \left\| (a_n \Delta_\Phi^{2/n} a_n^*)^{n/4}\Phi\right\|^2.
\een
Furthermore, by the same proof as for \cite[Lem. 7]{SH3}, we have
\ben
\label{limnG}
\lim_{n \to \infty} \| G(i/(2n)) \|^n_{n,\Phi} = \lim_{n \to \infty} \left\| (a_n \Delta_\Phi^{2/n} a_n^*)^{n/4}\Phi\right \|^2 = \|\Phi^{h/2-k_b/2}\|^2,
\een
with $k_b := U(b) k U(-b) - k$, see \eqref{kdef}, \eqref{kdef1}.
Here, $\Phi^x \in \H$ is defined for any self-adjoint element $x \in \M$ by \eqref{psih}.
Combining \eqref{limnG}, \eqref{Ginfn} with lemma \ref{lem:hirsch} therefore gives
\ben
\label{himp}
\log \| \Phi^{h/2-k_b/2} \|^2 \leq   \int_\RR
 \beta_{0}(t) 
  \log  \left\| G(t+i/2) \right\|  \ d t .
\een
We still require an expression for $\left\| G(t+i/2) \right\|$. This can be obtained from the last expression in \eqref{Grep}, using the 
standard relations and analyticity properties of modular operators, as well as $J_{\hPhi,\Omega} = J_{\Omega,\hPhi} = J_\Omega$ 
for vectors $\hPhi$ in the natural cone \cite{AM82}. We find
\ben
\begin{split}
\| G(t+i/2) \| &= \left\| e^{h/2} U(b) \Delta_{\hPhi,\Omega}^{-it+1/2}\Delta_{\Omega}^{\prime -it+1/2}U(-b) \Delta_{\Omega}^{-it+1/2}\Delta_{\Omega,\hPhi}^{\prime -it+1/2}\hPhi 
\right\| \\
&= \left\| e^{h/2} U(b) \Delta_{\hPhi,\Omega}^{-it+1/2}\Delta_{\Omega}^{\prime -it+1/2}U(-b) J_\Omega S_\Omega \Delta_{\Omega}^{-it}\Delta_{\hPhi,\Omega}^{it}\Omega 
\right\| \\
&= \left\| e^{h/2} U(b) \Delta_{\hPhi,\Omega}^{-it+1/2}\Delta_{\Omega}^{\prime -it+1/2}U(-b) J_\Omega S_\Omega u_{-t} \Omega 
\right\| \\
&= \left\| e^{h/2} U(b) \Delta_{\hPhi,\Omega}^{-it+1/2}\Delta_{\Omega}^{\prime -it+1/2}U(-b) J_\Omega u_{-t}^* \Omega \right\| \\
&= \left\| J_\Omega u_{-t}^* J_\Omega e^{h/2} U(b) \Delta_{\hPhi,\Omega}^{-it+1/2}\Delta_{\Omega}^{\prime -it+1/2} U(-b) \Omega \right\| \\
&= \left\| e^{h/2} U(b) \Delta_{\hPhi,\Omega}^{-it+1/2} \Omega \right\| \\
&= \left\| e^{h/2} U(b) \Delta_{\Phi,\Omega}^{-it}  J_{\Phi,\Omega}^*  S_{\Phi,\Omega} \Omega \right\| \\
&= \left\| e^{h/2} U(b) J_{\Phi,\Omega}^*  \Delta_{\Omega,\Phi}^{-it}   \Phi \right\| \\
&= \left\| e^{h/2} U(b) J_{\Omega,\Phi}  \Delta_{\Omega,\Phi}^{-it}   \Delta_{\Phi}^{it}  \Phi \right\| \\
&= \left\| e^{h/2} U(b) J_{\Omega,\Phi}  u_{-t} \Phi \right\|
\end{split}
\een
Equation \eqref{himp} thereby gives us
\ben
\label{himp1}
\begin{split}
2\log \| \Phi^{h/2-k_b/2} \|^2 & \leq   \int_\RR
 \beta_{0}(t) 
  \log  \left\| e^{h/2} U(b) J_{\Omega,\Phi}  u_{-t} \Phi \right\|^2  \ d t \\
  & \leq   \log \int_\RR
 \beta_{0}(t) 
\left\| e^{h/2} U(b) J_{\Omega,\Phi}  u_{t} \Phi \right\|^2  \ d t
 \end{split}
\een
using Jensen's inequality and $\beta_0(t)=\beta_0(-t)$ in the second step.
This relation is the basis for the next proposition:
\bprop
\label{prop3}
Assume $b \ge 0$, and $c \varphi \le \omega \le c^{-1} \varphi$ for some $c>0$. Then
\ben
\Big(\Phi, \Big\{ -U(b) \log \Delta_{\Phi,\Omega} U(-b) + \log \Delta_{\Phi,\Omega} + 2\pi b P \Big\} \Phi\Big) \ge S_{\rm meas}(\varphi |\! | \varphi_b)_\M,
\een
which uses, setting $\alpha_b(m)=U(-b)mU(b)$,
\ben
\label{vphib}
\varphi_{b}(m) := \int_\RR
 \beta_{0}(t) 
\left(  J_{\Omega,\Phi} u_{t} \Phi, \alpha_{b}(m) J_{\Omega,\Phi}  u_{t} \Phi \right)   d t, \quad m \in \M.
\een
\eprop
\begin{proof}
We take minus of inequality \eqref{himp1} and then add $\varphi(h)$ to both sides. Using also the chain rule $(\Phi^x)^y=\Phi^{x+y}$ \cite[Thm. 12.10]{OP}, we obtain
\ben
\label{himp2}
2\Big[\varphi(h/2) -\log \| (\Phi^{-k_b/2})^{h/2} \|^2 \Big] \ge  \varphi(h) - \log \varphi_b(e^h) .
\een
We next take the supremum over all self-adjoint elements $h \in \M$, and use the variational formula 
\eqref{SMdef}, as well as the variational formula 
\eqref{eq:var}. This shows 
\ben
\label{himp21}
2S(\Phi |\!| \Phi^{-k_b/2})_\M \ge  S_{\rm meas}(\varphi |\! | \varphi_b)_\M .
\een
Finally, we use  $S(\Phi |\!| \Phi^{x})  = -(\Phi, x\Phi)$ \cite[Thm. 3.10]{Ar77}, so
\ben
\label{himp22}
(\Phi,k_b \Phi) \ge  S_{\rm meas}(\varphi |\! | \varphi_b)_\M .
\een
Recall our formulas for 
$k_b$, that is $k=\Delta_\Omega - \Delta_{\Phi,\Omega}$ and $k_b := U(b) k U(-b) - k$. These formulas, our domain assumptions, and the relations for half-sided modular inclusions
which imply that $U(b) \log \Delta_\Omega U(-b) = 2\pi bP$, give 
\ben
\begin{split}
(\Phi,k_b \Phi) &= (\Phi, \{ U(b) k U(-b) - k\} \Phi) \\
&= 
(\Phi, \{ U(b) (\log \Delta_{\Omega} - \log \Delta_{\Phi,\Omega}) U(-b) - (\log \Delta_{\Omega} - \log \Delta_{\Phi,\Omega}) \} \Phi)\\
&= (\Phi, \{ -U(b) \log \Delta_{\Phi,\Omega} U(-b) + \log \Delta_{\Phi,\Omega} + 2\pi b P \} \Phi),
\end{split}
\een
proving the proposition.
\end{proof}
When $m \in \M(0,b)$, the formula \eqref{vphib} for $\varphi_b(m)$  simplifies, since
\ben
\label{phibsimpl}
\begin{split}
&\left(  J_{\Omega,\Phi} u_{t} \Phi, \alpha_{b}(m) J_{\Omega,\Phi}  u_{t} \Phi \right)\\
=&\left(  J_\Omega v'_{\Omega,\Phi} \Delta^{it}_{\Omega,\Phi} \Phi, \alpha_{b}(m) J_{\Omega,\Phi} v'_{\Omega,\Phi} \Delta^{it}_{\Omega,\Phi} \Phi \right)\\
=&\left(   v'_{\Omega,\Phi} \Delta^{it}_{\Omega,\Phi} \Phi, J_\Omega \alpha_{b}(m^*) J_{\Omega} v'_{\Omega,\Phi} \Delta^{it}_{\Omega,\Phi} \Phi \right)\\
=&\left(   v'_{\Omega,\Phi} \Delta^{it}_{\Omega,\Phi} \Phi, J_\Omega U(b)^* m^* U(b) J_{\Omega} v'_{\Omega,\Phi} \Delta^{it}_{\Omega,\Phi} \Phi \right)\\
=&\left(   v'_{\Omega,\Phi} \Delta^{it}_{\Omega,\Phi} \Phi, J_{\Omega,b/2} m^* J_{\Omega,b/2} v'_{\Omega,\Phi} \Delta^{it}_{\Omega,\Phi} \Phi \right)\\
=&\left(\Delta^{it}_{\Omega,\Phi} \Phi, J_{\Omega,b/2} m^* J_{\Omega,b/2}\Delta^{it}_{\Omega,\Phi} \Phi \right)\\
=&\left( \Phi, \Delta^{-it}_{\Omega} J_{\Omega,b/2} m^* J_{\Omega,b/2}\Delta^{it}_{\Omega} \Phi \right).
\end{split}
\een
Here we used that, by the relations for half-sided modular inclusions, $U(b)J_\Omega = J_{\Omega,b/2}$ is the reflection about the midpoint. Since this reflection leaves 
$\M(0,b)$ invariant, we get $x:=J_{\Omega,b/2} m^* J_{\Omega,b/2} \in \M(0,b)$, which commutes 
with the isometry $v'_{\Omega,\Phi}:=J_\Omega J_{\Omega,\Phi} \in \M'$, and satisfies
$\Delta^{-it}_{\Omega,\Phi}x\Delta^{it}_{\Omega,\Phi}=
\Delta^{-it}_{\Omega}x\Delta^{it}_{\Omega}$
by relative modular theory. 

Using the definitions of $T,R$ in the statement of the theorem and \eqref{phibsimpl} we get $\varphi_b(m) = \varphi(T\circ R(m))$ for all $m\in \M(0,b)$. The monotonicity of the measured relative entropy and proposition \ref{prop3} therefore yield
\ben
\label{xxx1}
\Big(\Phi, \Big\{ -U(b) \log \Delta_{\Phi,\Omega} U(-b) + \log \Delta_{\Phi,\Omega} + 2\pi b P \Big\} \Phi\Big) \ge S_{\rm meas}(\varphi |\! | \varphi \circ T \circ R)_{\M(0,b)}.
\een
This inequality also holds for $\Phi$ replaced by $u_s\Phi$, because $c^{-1} \varphi_s \le \omega \le c\varphi_s$ by the invariance of $\omega$ under the modular flow, 
using the notation $\varphi_s = (u_s \Phi, \ . \ u_s\Phi)$ as in the statement of the theorem (for $a=0$).
The other assumptions of the theorem also still hold for $u_s\Phi$, see \cite[Lem. 5.3]{HL25b}.
Having made the replacement, we evaluate the left side using the identity obtained by applying $\partial_t \ . \ |_{t=0}$ to the following equation \cite[Proof of lem. 5.4]{HL25b}:
\ben
\label{usform5}
(u_s \Phi, U(b) \Delta^{-it}_{u_s\Phi,\Omega} U(-be^{-2\pi t}) \Delta^{it}_{u_s\Phi,\Omega} u_s\Phi) = (\Phi, U(be^{2\pi s}) \Delta^{-it}_{\Phi,\Omega} U(-be^{-2\pi(t-s)}) \Delta^{it}_{\Phi,\Omega} \Phi).
\een
The equation \eqref{xxx1} thereby yields
\ben
\begin{split}
\label{bexp}
&\Big(\Phi, \Big\{ -U(be^{2\pi s}) \log \Delta_{\Phi,\Omega} U(-be^{2\pi s}) + \log \Delta_{\Phi,\Omega} + 2\pi b e^{2\pi s} P \Big\} \Phi\Big) \\
\ge & \, S_{\rm meas}(\varphi_s |\! | \varphi_s \circ T \circ R)_{\M(0,b)}.
\end{split}
\een
The domain assumptions of the theorem \ref{mainthm} imply that derivative with respect to $x:=be^{2\pi s}$ of the expression on the left side of \eqref{bexp} exists, meaning that 
\ben
\begin{split}
\label{aux22}
& x\partial_x
\Big(\Phi, \Big\{ -U(x) \log \Delta_{\Phi,\Omega} U(-x) + \log \Delta_{\Phi,\Omega} + 2\pi x P \Big\} \Phi\Big)_{x=0} + o(x) \\
 & \hspace{2cm} \ge S_{\rm meas}(\varphi_s |\! | \varphi_s \circ T \circ R)_{\M(0,b)}.
\end{split}
\een
(Recall that $o(x)$ donotes a function such that $\lim_{x \to 0} o(x)/x = 0$.)
By \cite[prop. 4.5]{HL25}, the derivative term on the left side of \eqref{aux22} is equal to $2\pi (\Phi, P \Phi) - \partial \bar S(0)$. This proves 
\ben
\label{thm11}
S_{\rm meas}(\varphi_s | \! | \varphi_s \circ T \circ R)_{\M(0,b)}
\le \Big[ 2\pi (\Phi, P\Phi) - \partial \bar S(0) \Big] be^{2\pi s} + o\left( be^{2\pi s} \right)
\een
for $be^{2\pi s} \to 0+$.
Finally, we use the sum rule \eqref{dS1} for the term in square brackets. This gives the statement of the theorem \ref{mainthm}.

\section{Outlook}
\label{outlook}
We end this note speculating about the possible significance of the states $\Phi_s = u'_s \Phi$ saturating the ant bound \eqref{antleft} as $s \to \infty$. Since the ant bound implies the QNEC in the formulation by \cite{CF18}, $\partial^2 S(a) \ge 0$, these states would also give an explicit error term in the QNEC when written more explicitly in terms of the expectation value of the stress tensor and the entanglement entropy \cite{B1,W1}.

In \cite{B3}, an interesting interpretation of these states was proposed in the context of holography. We think it would also be interesting to understand them better directly in the context of semi-classical gravity. E.g., one could have in mind a state $\Phi$ solving the semi-classical Einstein equations, describing a geometry representing a spherically symmetric collapsing null shell made up of a null-fluid, whose interior is a (perturbation of) Minkowski spacetime and whose exterior is a (perturbation of) Schwarzschild spacetime. One might then expect that $\Phi_s$ approaches a state equal to the Minkowski vacuum inside the shell and a (perturbation of) Schwarzschild spacetime outside. Expection values of stress tensors spacetimes representing a null shell with Minkowskian interior and Schwarzschild exterior have recently been computed in \cite{Ori:2025zhe}. 

\medskip

\noindent
{\bf Acknowledgements.} I warmly thank Roberto Longo for discussions.

\noindent
{\bf Data availability.} Data sharing not applicable to this article as no data sets were generated or analysed during the current study.

\medskip

\noindent
{\bf Declarations}

\medskip

\noindent
{\bf Conflict of interest.} The author has no relevant financial or non-financial interests to disclose. 

\end{document}